\documentclass[a4paper,english,10pt]{article}

\usepackage[utf8]{inputenc}
\usepackage{amsmath,amsfonts,amsthm,amssymb,stmaryrd}
\usepackage{enumerate,hyperref}
\usepackage{algorithm}
\usepackage{algorithmic}
\usepackage{verbatim}
\usepackage[textwidth=14.5cm,textheight=23.5cm]{geometry}
\usepackage{tikz}
\usetikzlibrary{calc,backgrounds,arrows,positioning,decorations.pathmorphing,shapes}

\theoremstyle{plain}\newtheorem{theorem}{Theorem}
\theoremstyle{plain}
\theoremstyle{plain}\newtheorem{proposition}[theorem]{Proposition}
\theoremstyle{plain}\newtheorem{lemma}[theorem]{Lemma}
\theoremstyle{plain}\newtheorem{definition}{Definition}
\theoremstyle{plain}\newtheorem{remark}{Remark}
\theoremstyle{plain}\newtheorem{example}[remark]{Example}

\DeclareRobustCommand{\eg}{\textit{e.g.}}
\DeclareRobustCommand{\ie}{\textit{i.e.}}
\DeclareRobustCommand{\cf}{\textit{cf.}}
\DeclareRobustCommand{\wrt}{\textit{w.r.t.}}
\makeatletter
\DeclareRobustCommand{\etc}{%
    \@ifnextchar{.}%
        {\textit{etc}}%
        {\textit{etc.}}%
}
\makeatother

\newcommand{\mbb}[1]{{\ensuremath{\mathbb{#1}}}}
\newcommand{\mcl}[1]{{\ensuremath{\mathcal{#1}}}}
\newcommand{\mfk}[1]{{\ensuremath{\mathfrak{#1}}}}

\newcommand{\mrm}[1]{{\ensuremath{\mathrm{#1}}}}

\newcommand{\ltrue}{\tt t\!t}
\newcommand{\lfalse}{\tt f\!f}

\newcommand{\dif}{\mathrm{d}}

\newcommand{\cone}[1]{{#1{\uparrow}}}
\newcommand{\scone}[1]{{#1{\downarrow}}}

\newcommand{\fw}{\mathcal{F}_{\mathfrak{W}}}

\makeatletter
\newcommand{\xRightarrow}[2][]{\ext@arrow 0359\Rightarrowfill@{#1}{#2}}
\makeatother

\newcommand{\defeq}{\triangleq}
\newcommand{\defiff}{\stackrel{\triangle}{\iff}}

\newcommand{\cat}[1]{\textsc{#1} }

\newcommand{\hole}{\mbox{\large\bf\_} }

\hyphenation{
func-tor func-tors  endo-func-tor endo-func-tors co-prod-uct 
coal-ge-braic coal-ge-bra coal-ge-bras bisim-u-la-tion bisim-u-la-tions 
mon-oid mon-oids mon-oid-al semi-ring semi-rings
}

\hypersetup{
	pdftitle={Weak bisimulations for labelled transition systems weighted over semirings},
	pdfauthor={Marino Miculan, Marco  Peressotti},
	pdfsubject={Concurrency Theory},
	pdfkeywords={Weighted systems, weak bisimulation}
}

\title{Weak bisimulations for labelled transition systems\\ weighted over semirings}
\author{Marino Miculan \qquad Marco  Peressotti\\
Dept.~of Mathematics and
  Computer Science, University of Udine, Italy
  \\
  \href{mailto:marino.miculan@uniud.it}{\tt marino.miculan@uniud.it}
  \qquad
   \href{mailto:marco.peressotti@uniud.it}{\tt marco.peressotti@uniud.it}
}
\date{October 10, 2013}

\begin{document}

\maketitle

\begin{abstract}
  Weighted labelled transition systems are LTSs whose transitions are
  given weights drawn from a commutative monoid.  WLTSs subsume a wide
  range of LTSs, providing a general notion of strong (weighted)
  bisimulation.  In this paper we extend this framework towards other
  behavioural equivalences, by considering \emph{semirings} of
  weights.  Taking advantage of this extra structure, we introduce a
  general notion of \emph{weak weighted bisimulation}. We show that
  weak weighted bisimulation coincides with the usual weak
  bisimulations in the cases of non-deterministic and
  fully-probabilistic systems; moreover, it naturally provides a
  definition of weak bisimulation also for kinds of LTSs where this
  notion is currently missing (such as, stochastic systems).  Finally,
  we provide a categorical account of the coalgebraic construction of
  weak weighted bisimulation; this construction points out how to port
  our approach to other equivalences based on different notion of
  observability.
\end{abstract}

\section{Introduction}
Many extensions of labelled transition systems have been proposed for
dealing with quantitative notions such as execution times, transition
probabilities and stochastic rates; see \eg~\cite{bg98:empa,denicola13:ultras,hhk2002:tcs,hillston:pepabook,klinS08,pc95:cj}
among others.  This ever-increasing plethora of variants has naturally
pointed out the need for general mathematical frameworks, covering
uniformly a wide range of cases, and offering general results and
tools.  As examples of these theories we mention \textsc{ULTraS}s
\cite{denicola13:ultras} and \emph{weighted labelled transition
  systems} (WLTSs) \cite{tofts1990:synchronous,Klin09wts,klins:ic2012}. In particular, in a
WLTS every transition is associated with a \emph{weight} drawn from a
commutative monoid \mfk W; the monoid structure defines how weights of
alternative transitions combine.  As we will recall in
Section~\ref{sec:wlts}, by suitably choosing this monoid we can
recover ordinary
non-deterministic LTSs, probabilistic transition systems, and
stochastic transition systems, among others. WLTSs offer
a notion of \emph{(strong) \mfk W-weighted bisimulation}, which can
be readily instantiated to particular cases obtaining precisely the
well-known Milner's strong bisimulation \cite{milner:cc}, Larsen-Skou's
strong probabilistic bisimulation \cite{ls:probbisim}, strong stochastic
bisimulation \cite{hillston:pepabook}, \etc.

However, in many situations strong bisimulations are too fine, and
many coarser relations have been introduced since then.  Basically,
these \emph{observational} equivalences do not distinguish systems
differing only for unobservable or not relevant transitions.
Likely the most widely known of these observational equivalences is
Milner's \emph{weak bisimulation} for non-deterministic LTSs
\cite{milner:cc} (but see
\cite{glabbeek90:spectrum,glabbeek93:spectrum2} for many variations).
Weak bisimulations focus on systems' interactions (communications,
synchronizations, etc.), ignoring transitions associated with
systems' internal operations, hence called \emph{silent} (and denoted
by the $\tau$). 

\looseness=-1
Unfortunately, weak bisimulations become quite more problematic in
models for stochastic systems, probabilistic systems, \etc. The
conundrum is that we do not want to observe $\tau$-transitions but at
the same time their quantitative effects (delays, probability
distributions) are still observable and hence cannot be ignored.  In
fact, for quantitative systems there is no general agreement of what a
weak bisimulation should be.  As an example, consider the stochastic
system $S_1$ executing an action $a$ at rate $r$, and a system $S_2$
executing $\tau$ at rate $r_1$, followed by an $a$ at rate $r_2$:
should these two systems be considered weakly bisimilar?  
\[\begin{tikzpicture}[auto,xscale=2,font=\footnotesize,scale=.9,
	dot/.style={circle,	minimum size=5pt,inner sep=0pt, outer sep=2pt},
	dot stage1/.style={dot,fill=white,draw=black},
	dot stage2/.style={dot,fill=black,draw=black},
	arr stage1/.style={->,-open triangle 60},	
	arr stage2/.style={->,-triangle 60},
	baseline=(current bounding box.center)]
	\begin{scope}
	\node[dot stage1,label=180:$S_1$] (n0) at (0,0) {};
	\node[dot stage1] (n1) at (1,0) {};

	\draw[arr stage1] (n0) to node {\(a,r\)} (n1);
	\end{scope}

	\begin{scope}[xshift=25mm]
	\node[dot stage1,label=180:$S_2$] (n0) at (0,0) {};
	\node[dot stage1] (n1) at (1,0) {};
	\node[dot stage1] (n2) at (2,0) {};

	\draw[arr stage1] (n0) to node {\(\tau,r_1\)} (n1);
	\draw[arr stage1] (n1) to node {\(a,r_2\)} (n2);
	\end{scope}
\end{tikzpicture}\]
Some
approaches restrict to instantaneous $\tau$-actions (and hence
$r_2=r$) \cite{mb2007:ictcs}; others require that the average times of
$a$'s executions are the same the two systems - but still these can be
distinguished by looking at the variances \cite{mb2012:qapl}.
Therefore, it is not surprising that many definitions proposed in
literature are rather \emph{ad-hoc}, and that a general mathematical
theory is still missing.

This is the problem we aim to address in this paper. More precisely,
in Section~\ref{sec:weak-bisim} we introduce the uniform notion of
\emph{weak weighted bisimulation} which applies to labelled transition
systems weighted over a \emph{semiring}.  The multiplication operation of
semirings allows us to compositionally extend weights to multi-step
transitions and traces.  In Section~\ref{sec:weak-instances} we show
that our notion of weak bisimulation coincides with the known ones in
the cases of non-deterministic and fully probabilistic systems, just
by changing the underlying semiring.  Moreover it naturally applies to
stochastic systems, providing an effective notion of \emph{weak
  stochastic bisimulation}.  As a side result we introduce a new
semiring of \emph{stochastic variables} which generalizes that of
rated transition systems \cite{klinS08}.

Then, in Section~\ref{sec:cw-algorithm} we present the general
algorithm for computing weak weighted bisimulation equivalence
classes, parametric in the underlying semiring. This algorithm is a
variant of Kanellakis-Smolka's algorithm for deciding strong
non-deterministic bisimulation \cite{ks1990:ic}.  Our solution builds
on the refinement technique used for the \emph{coarsest stable
  partition}, but instead of ``strong'' transitions in the original
system we consider ``weakened'' ones.  We prove that this algorithm is
correct, provided the semiring satisfies some mild conditions,
\ie\ it is $\omega$-complete. Finally, we discuss also its complexity,
which is comparable with Kanellakis-Smolka's algorithm. Thus, this
algorithm can be used in the verification of many kinds of systems,
just by replacing the underlying semiring (boolean, probabilistic,
stochastic, tropical, arctic, \dots) and taking advantage of existing
software packages for linear algebras over semirings.

In Section~\ref{sec:cat-view} we give a brief categorical account of
weak weighted bisimulations. These will be characterized as
cocongruences between suitably \emph{saturated} systems, akin to the
elegant construction of $\epsilon$-elimination given in
\cite{sw2013:epsilon}.

In Section~\ref{sec:concl} we give some final remarks and directions for further work.

\section{Weighted labelled transition systems}\label{sec:wlts}

In this section we recall the notion of \emph{labelled transition
  systems weighted over a commutative monoid}, showing how these
subsume non-deterministic, stochastic and probabilistic systems, among
many others.
Weighted LTSs were originally introduced by Klin in \cite{Klin09wts} as
the prosecution and generalization of the work on stochastic SOS
presented in \cite{klinS08} with Sassone and were further developed in
\cite{klins:ic2012}.

In the following let \mfk W denote a generic commutative (aka \emph{abelian}) monoid
$(W,+,0)$, \ie~a set equipped with a distinguished element $0$ and a 
binary operation $+$ which is associative, commutative and has $0$ as 
left and right unit.

\begin{definition}[\mfk W-LTS {\cite[Def.~1]{Klin09wts}}]
\label{def:wlts}
  Given a commutative monoid $\mfk W = (W,+,0)$, a \emph{\mfk W-weighted labelled 
  transistion system} is a triple $(X,A,\rho)$ where:
  \begin{itemize}
     \item $X$ is a set of \emph{states} (processes);
     \item $A$ is an at most countable set of \emph{labels};
     \item $\rho:X\times A \times X \to W$ is a \emph{weight
         function}, mapping each triple of $X\times A \times X$ to a
       weight.
  \end{itemize}
  $(X,A,\rho)$ is said to be \emph{image finite}
  (resp.~\emph{countable}) iff for each $x\in X$ and $a\in A$, the set
  $\{ y \in X \mid \rho(x,a,y)\neq 0\}$ is finite (resp.~countable).
  A state $x\in X$ is said \emph{terminal} iff for every $a \in A$ and
  $y\in X$: $\rho(x,a,y)=0$.
\end{definition}
For adherence to the notation used in \cite{Klin09wts} and to support
the intuitions based on classical labelled transition systems we shall
often write $\rho(x\xrightarrow{a}y)$ for $\rho(x,a,y)$; moreover,
following a common notation for stochastic and probabilistic systems,
we will write also $x\xrightarrow{a,w}y$ to denote $\rho(x,a,y) = w$.

The monoidal structure was not used in Definition~\ref{def:wlts} but
for the existence of a distinguished element required by the image
finiteness (resp. countability) property.  The commutative monoidal
structure of weights comes into play in the notion of
\emph{bisimulation}, where weights of transitions with the same labels
have to be ``summed''. This operation is commonplace for stochastic
LTSs, but at first it may appear confusing with respect to the notion of
bisimulation of non-deterministic LTSs; we will explain it in
Section~\ref{sec:lts-as-wlts}.

\begin{definition}[Strong \mfk W-bisimulation {\cite[Def.~3]{Klin09wts}}]
\label{def:wlts-strong}
  Given a \mfk W-LTS $(X,A,\rho)$, a \emph{(strong) \mfk W-bisimulation} is an
  equivalence relation $R$ on $X$ such that for each pair $(x,x')$ of
  elements of $X$, $(x,x') \in R$ implies that for each label $a\in A$
  and each equivalence class $C$ of $R$:
  \[\sum_{y\in C}\rho(x\xrightarrow{a}y) = \sum_{y\in C}\rho(x'\xrightarrow{a}y)\text.\]
  Processes $x$ and $x'$ are said to be \emph{\mfk W-bisimilar} 
  (or just bisimilar when \mfk W is understood) if there exists a 
  \mfk W-bisimulation $\sim_{\mfk W}$ such that $x\sim_{\mfk W} x'$.
\end{definition}
Clearly \mfk W-bisimulations are closed under arbitrary unions
ensuring the \emph{\mfk W-bisimilarity} on any \mfk W-LTS to be the
largest \mfk W-bisimulation over it.\footnote{Actually, strong \mfk
  W-bisimulation has been proven to be a strong bisimulation in
  coalgebraic sense \cite{Klin09wts}.}

\begin{remark}%
\label{rem:wlts-strong}
In order for the above definition to be well-given, summations
need to be well-defined. Intuitively this means that
the \mfk W-LTS $(X,A,\rho)$ does not exceed the expressiveness 
of its underlying monoid of weights \mfk W.  Reworded, the
system has to be image finite if the monoid admits only finite
summations; image countable if the monoid admits countable
summations, and so on.
\end{remark}

In \cite{Klin09wts,klins:ic2012}, for the sake of simplicity the
authors restrict themselves to image finite systems (which is not
unusual in the coalgebraic setting).  In the present paper we extend
their definitions to the case of countable images.  This
generalization allows to capture a wider range of systems and is
crucial for the definition of weak and delay bisimulations.

In practice, Remark \ref{rem:wlts-strong} is not a severe restriction,
since the commutative monoids relevant for most systems of interest
admit summations over countable sets.  To supports this claim, in the
rest of this Section we illustrate how non-deterministic, stochastic
and probabilistic labelled transition systems can be recovered as
systems weighted over commutative monoids whit countable sums.  These
kind of commutative monoids are often called \emph{commutative
  $\omega$-monoids}\footnote{Monoids can be readily extended to
  $\omega$-monoids adding either colimits freely or an ``$\infty$''
  element.}.

\subsection{Non-deterministic systems are WLTS}
\label{sec:lts-as-wlts}
This section illustrates how non-deterministic labelled transition systems 
\cite{milner:cc}
can be recovered as systems weighted over the commutative $\omega$-monoid of 
logical values equipped with logical disjunction 
$\mfk 2 \defeq (\{\ltrue,\lfalse\},\lor,\lfalse)$.

\begin{definition}[Non-deterministic LTS]
\label{def:lts}
  A \emph{non-deterministic labelled transition system}
  is a triple $(X,A,\rightarrow)$ where:
  \begin{itemize}
     \item $X$ is a set of \emph{states} (processes);
     \item $A$ is an at most countable set of \emph{labels} (actions);
     \item $\mathop{\rightarrow} \subseteq X\times A\times X$ is the \emph{transition
       relation}.
  \end{itemize}
  As usual, we shall denote an $a$-labelled transition from $x$ to $y$
  \ie~$(x,a,y) \in {\rightarrow}$ by $x\xrightarrow{a}y$.  A state $y$
  is called \emph{successor} of a given state $x$ iff
  $x\xrightarrow{a}y$. If $x$ has no successors then it is said to be
  \emph{terminal}.  If every state has a finite set of
  successors then the system is said to be \emph{image finite}.
  Likewise it is said to be \emph{image countable} if each state has
  at most countably many successors.
\end{definition}

Every \mfk 2-valued weight function is a predicate defining a subset
of its domain, turning $\rho: X\times A \times X \to \mfk 2$
equivalent to the classical definition of the transition relation
${\rightarrow} \subseteq X\times A \times X$.

\begin{definition}[Strong non-deterministic bisimulation]
\label{def:lts-bisim}
  Let $(X,A,{\rightarrow})$ be an LTS. An equivalence relation $R\subseteq X\times X$
  is a \emph{(strong non-deterministic) bisimulation on $(X,A,\rightarrow)$} iff 
  for each pair of states $(x,x') \in R$, for any label 
  $a \in A$ and each equivalence class $C \in X/R$:
  \[
    \exists y \in C. x\xrightarrow{a}y \iff \exists y' \in C. x'\xrightarrow{a}y'
  \text.\]
  Two states $x$ and $x'$ are said \emph{bisimilar} iff there exists a bisimulation
  relation $\sim_l$ such that $x\sim_l y$.  The greatest bisimulation for 
  $(X,A,\rightarrow)$ uniquely exists and is called \emph{(strong) bisimilarity}.
\end{definition}
Strong \mfk 2-bisimulation and strong non-deterministic bisimulation
coincide, since logical disjunction over the states in a given class
$C$ encodes the ability to reach $C$ making an $a$-labelled transition.

\subsection{Stochastic systems are WLTS}
\label{sec:rts-wlts}
Stochastic systems have important application especially in the field
of quantitative analysis, and several tools and formalisms to describe
and study them have been proposed (\eg~PEPA
\cite{hillston:pepabook}, EMPA \cite{bg98:empa} and the stochastic
$\pi$-calculus \cite{pc95:cj}).  Recently, \emph{rated transition
  systems} \cite{klinS08,klins:ic2012,denicola09:rts,denicola13:ultras} emerged
as a convenient presentation of these kind of systems.

\begin{definition}[Rated LTS {\cite[Sec.~2.2]{klinS08}}]
\label{def:rlts}
  A \emph{rated labelled transition system}
  is a triple $(X,A,\rho)$ where:
  \begin{itemize}
     \item $X$ is a set of \emph{states} (processes);
     \item $A$ is a countable set of \emph{labels} (actions);
     \item $\rho : X\times A\times X \to \mbb R^+_0$ is the \emph{rate function}.
  \end{itemize}
\end{definition}

Semantics of stochastic processes is usually given by means of labelled
continuous time Markov chains (CTMC).  The real number $\rho(x,a,y)$
is interpreted as the parameter of an exponential probability
distribution governing the duration of the transition from state $x$
to $y$ by means of an $a$-labelled action and hence encodes the
underlying CTMC (for more information about CTMCs and their presentation
by transition rates see \eg~\cite{hhk2002:tcs,hillston:pepabook,prakashbook09,pc95:cj}).

\begin{definition}[Strong stochastic bisimulation]
\label{def:rts-bisim}
  Given a rated system $(X,A,\rho)$ an equivalence relation $R\subseteq X\times X$
  is a \emph{(strong stochastic) bisimulation on $(X,A,\rho)$} 
  (or \emph{strong equivalence} \cite{hillston:pepabook}) iff 
  for each pair of states $(x,x') \in R$, for any label 
  $a \in A$ and each equivalence class $C \in X/R$:
  \[\sum_{y\in C}\rho(x,a,y) = \sum_{y\in C}\rho(x',a,y)\text.\]
  Two states $x$ and $x'$ are said \emph{bisimilar} iff there exists a bisimulation
  relation $\sim_s$ such that $x\sim_s y$. The greatest bisimulation for 
  $(X,A,\rightarrow)$ uniquely exists and is called \emph{(strong) bisimilarity}.
\end{definition}

Rated transition systems (hence stochastic systems) are precisely
WLTS weighted over the commutative monoid of nonnegative real 
numbers (closed with infinity) under addition 
$(\overline{\mbb R}_0^+,+,0)$ and stochastic bisimulations
correspond to $\overline{\mbb R}_0^+$-bisimulations, as shown in \cite{Klin09wts}.
Moreover, $\overline{\mbb R}_0^+$ is an $\omega$-monoid since non-negative real numbers
admit sums over countable families.
In particular, the sum of a given countable family $\{x_i \mid i \in I\}$
is defined as the supremum of the set of sums over its finite subfamilies:
\[\sum_{i\in I}x_i \defeq \sup\left\{\sum_{i \in J}x_i \mid J \subseteq I, |J| < \omega \right\}\text.\]

\subsection{Probabilistic systems are (Constrained) WLTS}
\label{ex:plts-as-wlts}

This section illustrates how probabilistic LTSs are captured by
weighted ones.  We focus on fully probabilistic systems (also known as
\emph{generative systems}) \cite{gsb90:ic,ls:probbisim, baier97:cav}
but in the end we provide some hints on other types of probabilistic
systems.

Fully probabilistic system can be regarded as a specializations of
non-deterministic transition systems where probabilities are used to
resolve nondeterminism.  From a slightly different point of view, they
can also be interpreted as labelled Markov chains with discrete
parameter set \cite{ks76:fmc}.

\begin{definition}[Fully probabilistic LTS]
\label{def:plts}
  A \emph{fully probabilistic labelled transition system}
  is a triple $(X,A,\mrm P)$ where:
  \begin{enumerate}[\em(1)]
     \item $X$ is a set of \emph{states} (processes);
     \item $A$ is a countable set of \emph{labels} (actions);
     \item $\mrm P : X\times A\times X \to [0,1]$ is a function such
       that for any $x \in X$ $\mrm P(x,\hole, \hole)$ is either a
       discrete probability measures for $A \times X$ or the
       constantly $0$ function.
  \end{enumerate}
\end{definition}
In ``reactive'' probabilistic systems, in contrast to fully probabilistic systems, 
transition probability distributions are dependent on the occurrences of actions
\ie~for any $x \in X$ and $a \in A$ $\mrm P(x,a, \hole)$ is either a discrete 
probability measures for $X$ or the constantly $0$ function.

Strong probabilistic bisimulation has been originally introduced by Larsen and 
Skou \cite{ls:probbisim} for reactive systems and has been reformulated by 
van Glabbeek et al.~\cite{gsb90:ic} for fully probabilistic systems.

\begin{definition}[Strong probabilistic bisimilarity]
\label{def:plts-bisim}
  Let $(X,A,\mrm P)$ be a fully probabilistic system.  An equivalence relation 
  $R\subseteq X\times X$ is a \emph{(strong probabilistic) bisimulation on 
  $(X,A,\mrm P)$} iff for each pair of states $(x,x') \in R$, 
   for any label $a \in A$ and any equivalence class $C \in X/R$:
  \[
    \mrm P(x,a,C) = \mrm P(x',a,C)
  \]
  where $\mrm P(x,a,C) \defeq \sum_{y \in C}\mrm P(x,a,y)$.

  Two states $x$ and $x'$ are said \emph{bisimilar} iff there exists a bisimulation
  relation $\sim_p$ such that $x\sim_p y$. The greatest bisimulation for 
  $(X,A,\mrm P)$ uniquely exists and is called \emph{bisimilarity}.
\end{definition}

It would be tempting to recover fully probabilistic systems as LTS
weighted over the probabilities interval $[0,1]$ but unfortunately the
addition does not define a monoid on $[0,1]$ since it is not a total
operation when restricted $[0,1]$.  There exist various commutative
monoids over the probabilities interval, leading to different
interpretations of probabilistic systems (as will be shown in Section
\ref{sec:other-semirings}), but since in
Definition~\ref{def:plts-bisim} we sum probabilities of outgoing
transitions (\eg~to compute the probability of reaching a certain set
of states), the real number addition has to be used.

\begin{remark}[On partial commutative monoids]\label{rem-partial}
  The theory of weighted labelled transition systems can be extended to
  consider partial commutative monoids (\ie~$a + b$ may be undefined
  but when it is defined then also $b + a$ is and commutativity
  holds) or commutative $\sigma$-monoids to handle sums over
  opportune countable families (thus relaxing the requirement of
  weights forming $\omega$-monoids). However, every
  $\sigma$-semiring can be turned into an $\omega$-complete one
  by adding a distinguished $+\infty$ element and resolving
  partiality accordingly.
\end{remark}

Klin \cite{Klin09wts} suggested to consider probabilistic systems as
systems weighted over $(\mathbb R_0^+,+,0)$ but subject to suitable
constraints ensuring that the weight function is a state-indexed
probability distribution and thus satisfies Definition~\ref{def:plts}.
These \emph{constrained} WLTSs were proposed %
to deal with reactive probabilistic systems.
\begin{definition}[constrained \mfk W-LTS]
\label{def:cwlts}
  Let \mfk W be a commutative monoid and \mcl C be a constraint family.
  A \emph{$\mcl C$-constrained \mfk W-weighted labelled transistion system} is a 
  \mfk W-LTS $(X,A,\rho)$ such that its weight function $\rho$
  satisfies the constraints \mcl C over \mfk W.
\end{definition}

Then, fully probabilistic labelled transition systems 
are precisely constrained $\mathbb R_0^+$-LTSs $(X,A,\rho)$ subject to the 
constraint family:
\[\sum_{a \in A, y \in X}\rho(x,a,y) \in \{0,1\}\mbox{  for }x\in X\text.\]
Likewise, reactive probabilistic systems are $\mathbb R_0^+$-LTSs subject to the 
constraint family:
\[\sum_{y \in X}\rho(x,a,y) \in \{0,1\}\mbox{  for $x\in X$ and $a\in A$}\text.\]
Therefore strong bisimulations for these kind of systems are exactly strong 
$\mathbb R_0^+$-bisimulations.

\section{Weak bisimulations for WLTS over semirings}\label{sec:weak-bisim}

In the previous section we illustrated how weighted labelled
transitions systems can uniformly express several kinds of systems
such as non-deterministic, stochastic and probabilistic systems.
Remarkably, bisimulations for these systems were proved to be
instances of weighted bisimulations.

In this section we show how other observational equivalences can
be stated at the general level of the weighted transition system
offering a treatment for these notions uniform across the wide range
of systems captured by weighted ones.  Due to space constraints we
focus on weak bisimulation but eventually we discuss briefly how the proposed
results can cover other notions of observational equivalence.

\subsection{From transitions to execution paths}

Let $(X,A+\{\tau\},\rho)$ be a \mfk W-LTS. 
A \emph{finite execution path} $\pi$ for this system is a sequence of transition \ie~an 
alternating sequence of states and labels like
\[\pi = x_0 \xrightarrow{a_1} x_1 \xrightarrow{a_2} x_2 \dots x_{n-1}\xrightarrow{a_n} x_n\]
such that for each transition $x_{i-1} \xrightarrow{a_i} x_i$ in the path:
\[
  \rho(x_{i-1} \xrightarrow{a_i} x_i)\neq 0.
\]
Let $\pi$ denote the above path, then set:
\begin{gather*}
\mrm{length}(\pi) = n\qquad \mrm{first}(\pi) = x_0\qquad \mrm{last}(\pi) = x_n \qquad
\mrm{trace}(\pi) = a_1a_2\dots a_n
\text.\end{gather*}
to denote the length, starting state, ending state and trace of $\pi$ respectively.

In order to extend the definition of the weight function $\rho$ to
executions we need some additional structure on the domain of weights,
allowing us to capture concatenation of transition.  To this end, we
require weights to be drawn from a semiring, akin to the theory of
  weighted automata.  Recall that a semiring is a set $W$
equipped with two binary operations $+$ and $\cdot$ called
\emph{addition} and \emph{multiplication} respectively and such that:
\begin{itemize}
\item $(W,+,0)$ is a commutative monoid and $(W,\cdot,1)$ is a monoid;
\item multiplication left and right distributes over addition:
\begin{gather*}
a\cdot(b+c) = (a\cdot b) + (a\cdot c)\qquad
(a+b)\cdot c = (a\cdot c) + (b\cdot c)
\end{gather*}
\item multiplication by $0$ annihilates $W$:
\[0\cdot a = 0 = a \cdot 0\text.\]
\end{itemize}

Basically, the idea is to express parallel and subsequent transitions
(\ie~branching and composition) by means of addition and
multiplication respectively.  Therefore, multiplication is not
required to be commutative (\cf~the semiring of formal languages).
Distributivity ensures that execution paths are independent from the
alternative branching \ie~given two executions sharing some sub-path,
we are not interested in which is the origin of the sharing; as the
following diagram illustrates:
\begin{equation}\label{eq:path-dist}
\begin{tikzpicture}[auto,font=\small,
  baseline=(current bounding box.center),
  extended/.style={shorten >=-#1, shorten <=-#1},
  extended/.default=0pt,
  dot/.style={circle,fill=black,minimum size=4pt,inner sep=0, outer sep=2pt}]
  \node (s0) at (0,0) {
  \begin{tikzpicture}[auto,yscale=0.6,xscale=.5]
    \node[dot] (n0) at (0,2) {\(\)};
    \node[dot] (n1) at (0,1) {\(\)};
    \node[dot] (n2) at (-.7,0) {\(\)};
    \node[dot] (n3) at (0.7,0) {\(\)};
    \draw[->]  (n0) to node [] {\(a\)} (n1);
    \draw[->]  (n1) to node [pos=.7,swap] {\(b\)} (n2);
    \draw[->]  (n1) to node [pos=.7] {\(c\)} (n3);
  \end{tikzpicture}
  };
  \node[right=16pt of s0] (s1) {
  \begin{tikzpicture}[auto,yscale=0.6,xscale=.5]
    \node[dot] (n0) at (0,2) {\(\)};
    \node[dot] (n1) at (-.7,1) {\(\)};
    \node[dot] (n2) at (-.7,0) {\(\)};
    \node[dot] (n3) at (0,2) {\(\)};
    \node[dot] (n4) at (.7,1) {\(\)};
    \node[dot] (n5) at (.7,0) {\(\)};
    \draw[->]  (n0) to node [pos=.7,swap] {\(a\)} (n1);
    \draw[->]  (n1) to node [swap] {\(b\)} (n2);
    \draw[->]  (n3) to node [pos=.7] {\(a\)} (n4);
    \draw[->]  (n4) to node [] {\(c\)} (n5);
  \end{tikzpicture}
  };
  \node[right=23pt of s1] (s2) {
  \begin{tikzpicture}[auto,yscale=0.6,xscale=.5]
    \node[dot] (n0) at (0,0) {\(\)};
    \node[dot] (n1) at (0,1) {\(\)};
    \node[dot] (n2) at (-.7,2) {\(\)};
    \node[dot] (n3) at (0.7,2) {\(\)};
    \draw[<-]  (n0) to node [swap] {\(c\)} (n1);
    \draw[<-]  (n1) to node [pos=.7] {\(a\)} (n2);
    \draw[<-]  (n1) to node [pos=.7,swap] {\(b\)} (n3);
  \end{tikzpicture}
  };
  \node[right=16pt of s2] (s3) {
  \begin{tikzpicture}[auto,yscale=0.6,xscale=.5]
    \node[dot] (n0) at (-.7,2) {\(\)};
    \node[dot] (n1) at (-.7,1) {\(\)};
    \node[dot] (n2) at (0,0) {\(\)};
    \node[dot] (n3) at (.7,2) {\(\)};
    \node[dot] (n4) at (.7,1) {\(\)};
    \node[dot] (n5) at (0,0) {\(\)};
    \draw[->]  (n0) to node [swap] {\(a\)} (n1);
    \draw[->]  (n1) to node [pos=.7,swap] {\(c\)} (n2);
    \draw[->]  (n3) to node [] {\(b\)} (n4);
    \draw[->]  (n4) to node [pos=.7] {\(c\)} (n5);
  \end{tikzpicture}
  };
  \node at ($(s0)!0.5!(s1)$) {\(\equiv\)};
  \node at ($(s2)!0.5!(s3)$) {\(\equiv\)};
\end{tikzpicture}
\end{equation}
Finally, since weights of (proper) transitions are always different from $0$,
the annihilation property means that no proper execution can contain 
improper transitions.

Then, the weight function $\rho$ extends to finite paths by semiring multiplication
(therefore we shall use the same symbol):
\[\rho(x_0 \xrightarrow{a_1}x_1\dots \xrightarrow{a_n} x_n) \defeq \prod_{i=1}^n\rho(x_{i-1},a_i,x_i)\]

In the following let \mfk W be a semiring $(W,+,0,\cdot,1)$.

Semirings offer enough structure to extend weight function to finite execution
paths compositionally but executions can also be (countably) infinite.
Likewise countable branchings (\cf~Remark \ref{rem:wlts-strong}), paths of countable
length can be treated requiring multiplication to be defined also over 
(suitable) countable families of weights and obviously respect the
semiring structure.
However, the additional requirement for $(W,\cdot,1)$ can be avoided
by dealing with suitable sets of paths as long as these convey enough
information for the notion of weak bisimulation (and observational
equivalence in general).  In particular, a finite path $\pi$
determines a set of paths (possibly infinite) starting with $\pi$,
thus $\pi$ can be seen as a representative for the set.  Moreover, the
behavior of a system can be reduced to its \emph{complete} executions:
a path is called \emph{complete} (or ``full'' \cite{baier97:cav}) if
it is either infinite or ends in a \emph{terminal} state.

Intuitively, we distinguish complete paths only up to the chosen
representatives: longer representative may generate smaller sets of
paths, and this can be thought in ``observing more'' the system. If
two complete paths are distinguishable, we have to be able to
distinguish them in a finite way \ie~there must be two representative
with enough information to tell one set from the other.  Otherwise, if
no such representative exist, then the given complete paths are indeed
equivalent.  Therefore, it is enough to be able to compositionally
weight (finite) representatives in order to distinguish any complete
path.

The remaining of the subsection elaborates the above intuition
defining a $\sigma$-algebra over complete paths (for each state).
The method presented is a generalization to semirings of the one used in
\cite{baier98}.  This structure allows to deal
with sets of finite paths avoiding redundancies 
(\cf~Example \ref{ex:class-reachability})
and define weights compositionally.

Let $\mrm{Paths}(x)$, $\mrm{CPaths}(x)$ and $\mrm{FPaths}(x)$ denote the sets of
all, complete and finite paths starting in the state $x \in X$ respectively.
Likewise, we shall denote the corresponding sets of paths \wrt~any starting state
as $\mrm{Paths}$, $\mrm{CPaths}$ and $\mrm{FPaths}$ respectively (\eg~$\mrm{Paths} = \cup_{x\in X} \mrm{Paths}(x)$).
Paths naturally organize into a preorder by the prefix relation. In particular,
given $\pi,\pi' \in \mrm{Paths}(x)$ define $\pi\preceq\pi'$ if and only if 
one of the following holds:
\begin{enumerate}
  \item $\pi \equiv 
    x \xrightarrow{a_1} x_1 \dots \xrightarrow{a_n} x_n$ and
    $\pi' \equiv 
    x \xrightarrow{a'_1} x'_1  \dots \xrightarrow{a'_n} x'_{n'}$
    (both finite),
    $x_i = x'_i$ and $a_i = a'_i$ for $i \leq n \leq n'$;
  \item $\pi \equiv 
    x \xrightarrow{a_1} x_1 \dots \xrightarrow{a_n} x_n$ and
    $\pi' \equiv x \xrightarrow{a'_1} x'_1  \dots$
    (one finite and the other infinite),
    $x_i = x'_i$ and $a_i = a'_i$ for $i \leq n$;
  \item $\pi = \pi'$ (both infinite).
\end{enumerate}
For each finite path $\pi \in \mrm{FPaths}(x)$ define the \emph{cone
of complete paths generated by $\pi$} as follows:
\[
  \cone\pi \defeq \{\pi'\in \mrm{CPaths}(x) \mid \pi \preceq \pi'\}
\text.\]
Cones are precisely the sets we were sketching in the intuition above
and form a subset of the parts of $\mrm{CPaths}(x)$:
\[
\Gamma \defeq \{\cone\pi \mid \pi \in \mrm{FPaths}(x)\}\text.
\]
This set is at most countable since the set $\mrm{FPaths}(x)$ is so
and every two of its elements are either disjoint or one the subset 
of the other as the following Lemmas state.

\begin{lemma}
\label{lem:fpath-countable}
For any state $x\in X$, the set of finite paths $\mrm{FPaths}(x)$
of an image countable \mfk W-LTS is at most countable.
\end{lemma}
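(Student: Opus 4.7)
The plan is a straightforward induction on path length, using the standard fact that a countable union of countable sets is countable (assuming countable choice, which is standard in this coalgebraic setting).

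First, I would stratify $\mrm{FPaths}(x)$ by length: let $\mrm{FPaths}_n(x) \defeq \{\pi \in \mrm{FPaths}(x) \mid \mrm{length}(\pi) = n\}$, so that
\[
  \mrm{FPaths}(x) = \bigcup_{n \in \mbb N} \mrm{FPaths}_n(x).
\]
Since a countable union of at-most-countable sets is at most countable, it suffices to show each $\mrm{FPaths}_n(x)$ is at most countable.

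I would argue by induction on $n$. For the base case, $\mrm{FPaths}_0(x)$ contains only the trivial (empty) path at $x$. For the inductive step, every path $\pi' \in \mrm{FPaths}_{n+1}(x)$ decomposes uniquely as $\pi$ followed by a single transition $y \xrightarrow{a} z$, where $\pi \in \mrm{FPaths}_n(x)$ ends in $y = \mrm{last}(\pi)$, $a \in A$, and $z \in X$ satisfies $\rho(y,a,z) \neq 0$. By image countability, for each fixed $y$ and $a$ the set of such $z$ is at most countable; since $A$ is at most countable, the set of admissible one-step extensions from $y$ is an at-most-countable union of at-most-countable sets, hence at most countable. Combining this with the inductive hypothesis that $\mrm{FPaths}_n(x)$ is at most countable, we conclude that $\mrm{FPaths}_{n+1}(x)$ is at most countable as well.

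There is no real obstacle here; the only mild point to be careful about is invoking countable choice to pick, for each $(y,a)$, an enumeration of the successors with nonzero weight, so that the extensions of each length-$n$ path can be enumerated uniformly. This is standard and entirely consistent with the image-countable hypotheses already assumed in the paper.
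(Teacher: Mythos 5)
Your proof is correct and follows essentially the same route as the paper's: stratify $\mrm{FPaths}(x)$ by length, induct on the length using image countability (and countability of $A$) for the one-step extensions, and conclude via a countable union of at-most-countable sets. Your version is slightly more explicit about the decomposition of a length-$(n+1)$ path and the role of countable choice, but there is no substantive difference.
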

\begin{proof} 
By induction on the length $k$ of paths in $\mrm{FPaths}(x)$, these
are at most countable. In fact, for $k = 0$ there is exactly one path,
$\varepsilon$ and, taken the set of paths of length $k$ be at most 
countable, then the set of those with length $k+1$ is at most countable
because the system is assumed to be image countable. 
Then $\mrm{FPaths}(x)$ is at most countable since it is the disjoint union of 
\[\{\pi \in \mrm{Paths}(x) \mid \mrm{length}(\pi) = k\}\]
for $k \in \mbb N$.
\end{proof}

\begin{lemma}
\label{lem:geom-cones}
Two cones $\cone\pi_1$ and $\cone{\pi_2}$ are either disjoint
or one the subset of the other.
\end{lemma}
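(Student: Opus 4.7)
The plan is to do a simple case analysis on whether $\pi_1$ and $\pi_2$ share a starting state, exploiting the fact that $\preceq$ restricted to $\mrm{Paths}(x)$ is a tree order: two paths from the same root with a common extension must be comparable.

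First, observe that $\cone\pi \subseteq \mrm{CPaths}(\mrm{first}(\pi))$ by construction. Hence if $\mrm{first}(\pi_1) \neq \mrm{first}(\pi_2)$, the two cones live in disjoint subsets of $\mrm{CPaths}$ and are therefore disjoint themselves. So we may assume $\mrm{first}(\pi_1) = \mrm{first}(\pi_2) = x$ and both $\pi_1,\pi_2 \in \mrm{FPaths}(x)$.

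Next, I would show the following claim: if $\pi_1,\pi_2 \in \mrm{FPaths}(x)$ admit a common extension $\pi' \in \mrm{Paths}(x)$ (i.e. $\pi_1 \preceq \pi'$ and $\pi_2 \preceq \pi'$), then either $\pi_1 \preceq \pi_2$ or $\pi_2 \preceq \pi_1$. This is immediate from the definition of $\preceq$: write $\pi_i = x \xrightarrow{a^{(i)}_1} x^{(i)}_1 \dots \xrightarrow{a^{(i)}_{n_i}} x^{(i)}_{n_i}$ for $i=1,2$ and let $\pi' = x \xrightarrow{b_1} y_1 \dots$; the prefix conditions force $a^{(i)}_k = b_k$ and $x^{(i)}_k = y_k$ for every $k \leq n_i$, so the shorter of the two sequences is pointwise equal to an initial segment of the longer, yielding comparability.

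Finally, suppose $\cone{\pi_1} \cap \cone{\pi_2} \neq \emptyset$ and pick any $\pi' \in \cone{\pi_1} \cap \cone{\pi_2}$. By the claim, WLOG $\pi_1 \preceq \pi_2$. Then for every $\pi'' \in \cone{\pi_2}$ we have $\pi_2 \preceq \pi''$, and transitivity of $\preceq$ gives $\pi_1 \preceq \pi''$, i.e.\ $\pi'' \in \cone{\pi_1}$. Hence $\cone{\pi_2} \subseteq \cone{\pi_1}$, as required. I do not foresee any substantive obstacle: the only point needing minor care is verifying that $\preceq$ is indeed transitive (straightforward from the three clauses defining it) and that the three cases of the definition cover the possible extensions used in the claim.
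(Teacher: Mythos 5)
Your proof is correct and follows essentially the same route as the paper's: the paper splits on whether $\pi_1$ and $\pi_2$ are $\preceq$-comparable, deriving inclusion by transitivity in the comparable case and asserting that incomparable paths admit no common extension in the other, which is exactly the contrapositive of your claim. Your version merely spells out the justification for that key fact (and the harmless extra case of distinct starting states), so no substantive difference.
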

\begin{proof} 
For any $\pi \in \mrm{CPaths}(x)$, we have by definition:
\[\pi\in\cone{\pi_1} \iff \pi_1\preceq \pi \mbox{ and } \pi\in\cone{\pi_2} \iff 
\pi_2\preceq\]
Then, if $\pi_1\preceq\pi_2$ then $\pi\in\cone{\pi_2} \Rightarrow \pi\in\cone{\pi_1}$
(likewise for $\pi_2\preceq\pi_1$). For the other case, since
$\pi_1\not\preceq\pi_2 \land \pi_2\not\preceq\pi_1$ there is no $\pi$ such that
$\pi_1\preceq \pi \land \pi_2\preceq \pi$.
\end{proof}

Given $\Pi\subseteq\mrm{FPaths}(x)$, the set of all cones generated by
its elements is denoted by $\cone\Pi$ and defined as the (at most
countable) union of the cones generated by each $\pi \in \Pi$.  If
this union is over disjoint cones then $\Pi$ is said to be
\emph{minimal}.

Minimality is not preserved by set union even if operands are disjoint and
both minimal. As a counter example consider the sets $\{\pi\}$ and $\{\pi'\}$
for $\pi \prec \pi' \in \mrm{FPaths}(x)$; both are minimal and disjoint, but
their union is not minimal since $\cone{\pi'}\subseteq\cone{\pi}$.
However, $\Pi$ always has at least a subset $\Pi'$ being minimal and such that
\begin{equation}\label{eq:gen-same-cones}
  \cone\Pi = \cone{\Pi'}
\text.\end{equation}
and among these there exists exactly one which is also minimal in the sense of prefixes:
\begin{lemma}\label{lem:fpaths-support}
For $\Pi \subseteq \mrm{FPaths}(x)$, there exists a minimal subset 
$\Pi'\subseteq\Pi$ which satisfies (\ref{eq:gen-same-cones}), \ie~for any
$\Pi''\subseteq\Pi$ satisfying (\ref{eq:gen-same-cones}) we have:
$
\forall \pi''\in\Pi''\, \exists \pi' \in \Pi . \pi'\preceq\pi''\text.
$
We denote such $\Pi'$ by $\scone\Pi$.
\end{lemma}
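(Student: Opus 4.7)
The plan is to construct $\Pi'$ explicitly as the set of $\preceq$-minimal elements of $\Pi$, namely
\[
\Pi' := \{\pi \in \Pi \mid \forall \sigma \in \Pi,\, \sigma \preceq \pi \Rightarrow \sigma = \pi\}.
\]
The engine behind every step is the observation that $\preceq$ restricted to finite paths is well-founded: $\sigma \prec \pi$ forces $\mrm{length}(\sigma) < \mrm{length}(\pi)$ and lengths are bounded below by $0$. Moreover, since prefixes of a single path are totally ordered by $\preceq$, the set $\{\sigma \in \Pi \mid \sigma \preceq \pi\}$ is a finite chain for each $\pi \in \Pi$, and therefore contains a least element, which by construction lies in $\Pi'$. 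This single observation gives, for every $\pi \in \Pi$, a witness $\pi' \in \Pi'$ with $\pi' \preceq \pi$.

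I would then check the three required properties in turn. The cone equivalence $\cone{\Pi'} = \cone{\Pi}$: the inclusion $\subseteq$ is immediate from $\Pi' \subseteq \Pi$, while $\supseteq$ follows by taking any $\pi \in \Pi$, descending to some $\pi' \in \Pi'$ with $\pi' \preceq \pi$, and using transitivity of $\preceq$ to conclude $\cone{\pi} \subseteq \cone{\pi'}$. Minimality of $\Pi'$ (pairwise disjointness of cones): take distinct $\pi_1, \pi_2 \in \Pi'$; by Lemma \ref{lem:geom-cones} their cones are either disjoint (done) or nested. In the nested case, say $\cone{\pi_1} \subseteq \cone{\pi_2}$, any complete extension of $\pi_1$ is simultaneously an extension of $\pi_2$, so $\pi_1$ and $\pi_2$ are comparable finite prefixes of a common complete path, forcing $\pi_2 \prec \pi_1$; this contradicts $\pi_1 \in \Pi'$. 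The prefix-universality statement is essentially free: for any $\Pi'' \subseteq \Pi$ and $\pi'' \in \Pi''$, applying the well-foundedness argument to $\pi'' \in \Pi$ directly yields $\pi' \in \Pi'$ with $\pi' \preceq \pi''$, without invoking the cone hypothesis on $\Pi''$ at all. Uniqueness of $\Pi'$ among such subsets follows by symmetry and antisymmetry of $\preceq$.

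The only mildly delicate step is the nested-cones subcase of the minimality argument, which depends on every finite path admitting at least one complete extension (obtained either by running off to infinity or by terminating at a terminal state); this requires a small appeal to dependent choice in the countably branching setting but is otherwise routine. Everything else reduces to bookkeeping with the prefix order on finite paths and the finiteness of $\mrm{length}(\pi)$.
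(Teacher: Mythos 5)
Your proof is correct and follows essentially the same route as the paper: take the $\preceq$-minimal elements of $\Pi$, use well-foundedness of the prefix order on finite paths to descend from any $\pi\in\Pi$ to a minimal prefix, and invoke Lemma~\ref{lem:geom-cones} for the cone comparisons. You are in fact somewhat more careful than the paper's own argument, which asserts rather than verifies the disjointness of the cones of $\scone\Pi$ and leaves the construction of $\scone\Pi$ implicit; your explicit handling of the nested-cones case (and the remark that non-emptiness of cones rests on every finite path admitting a complete extension) fills a genuine gap in the original sketch.
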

\begin{proof}
Clearly $\cone\Pi = \emptyset$ iff $\Pi = \emptyset$ since there are no infinite
prefix descending chains. 
Then $\cone{(\scone{\Pi})} \subseteq \cone\Pi$ since $\scone{\Pi} \subseteq \Pi$ is minimal. For every $\pi \in \Pi$ there exists
$\pi' \in \scone\Pi$ such that $\pi'\preceq\pi$ and by 
Lemma~\ref{lem:geom-cones} $\cone\pi\subseteq\cone{\pi'}$ \ie~
$\cone\Pi\subseteq\cone{(\scone{\Pi})}$. Therefore $\cone\Pi = \cone{(\scone{\Pi})}$.
Consider $\Pi'$ as in the enunciate, then, for every $\pi\in \Pi$ there exists
$\pi' \in \scone\Pi$ such that $\pi'\preceq\pi$ and in particular if $\pi \in \Pi'$.
Uniqueness follows straightforwardly.
\end{proof}
The set $\scone\Pi$ is called \emph{minimal support of $\Pi$} and intuitively
correspond to the ``minimal'' set of finite executions needed to completely 
characterize the behavior captured by $\Pi$ and the complete paths it induces.
Any other path of $\Pi$ is therefore redundant (\cf~Example \ref{ex:class-reachability}).

The idea of which complete paths are distinguishable and then ``measurable''
(\ie~that can be given weight) is captured precisely by the notion of 
$\sigma$-algebra. In fact, the set of all cones $\Gamma$ (together
with the emptyset) induce a $\sigma$-algebra, as they form a
\emph{semiring of sets} (in the sense of \cite{zaanen1958:integration}).
\begin{lemma}
\label{lem:cpath-semiring}
The set $\Gamma\cup\{\emptyset\}$ is a semiring of sets
and uniquely induces a $\sigma$-algebra over $\mrm{CPaths}(x)$.
\end{lemma}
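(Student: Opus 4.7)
The plan is to verify the three defining properties of a semiring of sets---inclusion of $\emptyset$, closure under pairwise intersection, and expressibility of differences as disjoint unions of elements---and then to observe that any such family canonically generates a unique $\sigma$-algebra, defined as the intersection of all $\sigma$-algebras over $\mrm{CPaths}(x)$ containing $\Gamma$. The first two conditions come essentially for free. The empty set is present by construction, and for intersections Lemma~\ref{lem:geom-cones} gives that two cones $\cone{\pi_1}, \cone{\pi_2}$ are either disjoint (intersection $\emptyset$) or nested (intersection is the smaller cone), so in either case the result lies in $\Gamma\cup\{\emptyset\}$.

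The real content lies in handling the difference $\cone{\pi_1}\setminus\cone{\pi_2}$. If the two cones are disjoint the difference equals $\cone{\pi_1}$, and if $\cone{\pi_1}\subseteq\cone{\pi_2}$ it is empty. The interesting case is $\pi_1\prec\pi_2$: writing
\[\pi_2 = \pi_1\xrightarrow{a_{n+1}}x_{n+1}\xrightarrow{a_{n+2}}\cdots\xrightarrow{a_m}x_m,\]
a complete extension of $\pi_1$ fails to extend $\pi_2$ precisely when it agrees with $\pi_2$ up to some level $k\in\{n,\dots,m-1\}$ and then takes a transition $(a,y)\neq(a_{k+1},x_{k+1})$ of nonzero weight. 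Grouping such paths by their point of deviation yields
\[\cone{\pi_1}\setminus\cone{\pi_2} \;=\; \bigcup_{k=n}^{m-1}\;\bigcup_{\substack{(a,y)\neq(a_{k+1},x_{k+1})\\ \rho(x_k,a,y)\neq 0}}\cone{(\pi_1\xrightarrow{a_{n+1}}\cdots\xrightarrow{a_k}x_k\xrightarrow{a}y)},\]
and the union is disjoint by Lemma~\ref{lem:geom-cones} applied to the distinct deviating cones.

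The main subtlety is a mild mismatch of terminology. In Zaanen's original sense the disjoint decomposition of differences is required to be \emph{finite}, whereas under our image-countable hypothesis the inner union above is only at most countable (it is finite exactly when the system is image finite), so strictly speaking $\Gamma\cup\{\emptyset\}$ is a $\sigma$-semiring of sets. This does not affect the second half of the claim: the $\sigma$-algebra generated by $\Gamma$ over $\mrm{CPaths}(x)$ is well-defined and unique as the intersection of all $\sigma$-algebras containing $\Gamma$, and the countable decomposition above shows in particular that it is obtained from cones using only standard $\sigma$-algebra operations, so it is the natural candidate to carry the weight structure on $\mrm{CPaths}(x)$.
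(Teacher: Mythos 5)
Your proof is correct, and it is substantially more detailed than the paper's own argument, which is only a two-sentence sketch: closure under intersection "since cones are always either disjoint or one the subset of the other", and "set difference follows from the existence of minimal supports." You handle intersections identically (via Lemma~\ref{lem:geom-cones}), but for differences you take a genuinely different and more explicit route: instead of appealing to minimal supports (Lemma~\ref{lem:fpaths-support}), you decompose $\cone{\pi_1}\setminus\cone{\pi_2}$ by the point of deviation from $\pi_2$, exhibiting a concrete pairwise-disjoint family of cones covering the difference. This actually fills the gap the paper leaves open, since to invoke minimal supports one must first produce \emph{some} generating family of finite paths for the difference --- which is exactly what your deviation decomposition supplies (and your family is already an antichain, hence equal to its own minimal support). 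One small point worth double-checking in your decomposition is that a complete path agreeing with $\pi_2$ up to level $k<m$ cannot simply stop there: this holds because $x_k$ admits the nonzero transition to $x_{k+1}$, hence is not terminal, so your case analysis is exhaustive. Your terminological caveat is also well taken and applies equally to the paper's version: under image countability the decomposition is only countable, so $\Gamma\cup\{\emptyset\}$ is a semiring of sets only in a $\sigma$-relaxed sense rather than Zaanen's strictly finite one; as you note, this is harmless for the generated $\sigma$-algebra, which exists and is unique as the intersection of all $\sigma$-algebras containing $\Gamma$.
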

\begin{proof}[Proof (Sketch)]
$\Gamma\cup\{\emptyset\}$ %
is closed under finite intersections since cones are always either disjoint or
one the subset of the other. Set difference follows from the existence
of minimal supports.
\end{proof}

As discussed before, in general the weight of
$\Pi\subseteq\mrm{FPaths}(x)$ cannot be defined as the sum of the
weights of its elements, due to redundancies.  However, what we are
really interested in is the \emph{unique} set of behaviors
described by $\Pi$, \ie~the complete paths it subsumes.  
Therefore we first extend $\rho$ to \emph{minimal} $\Pi$, as follows:
\[
 \rho(\Pi) \defeq \sum_{\pi \in \Pi}\rho(\pi)  \quad \mbox{ for } \Pi \mbox{ minimal.}
\]
then, for all $\Pi$, we simply take
\[
\rho(\Pi) \defeq \rho(\scone\Pi)\text.
\]
Because $\Pi$ can be countably infinite, semiring addition 
\emph{has to} support countable
additions over these sets (\cf~Remark~\ref{rem:wlts-strong}).

\subsection{Well-behaved semirings}\label{sec:good-semirings}

\begin{definition}\label{def:good-semiring}
Let the semiring $\mfk W$ be endowed with a preorder $\sqsubseteq$.
We call the semiring \emph{well-behaved} if, and only if, for
any two $\Pi_1$ and $\Pi_2$ the following holds:
\[\Pi_1 \subseteq \Pi_2 \Rightarrow \rho(\Pi_1) \sqsubseteq \rho(\Pi_2)\text.\]
\end{definition}
If the semiring is well-behaved then addition unit $0$ is necessarily the 
bottom of the preorder because $\rho(\emptyset) \defeq 0$.
Moreover, the semiring operations have to respect the preorder \eg:
\[ a\sqsubseteq b \Rightarrow a + c \sqsubseteq b + c\text.\]
As a direct consequence, annihilation of parallel is
avoided by the \emph{zerosumfree} property of the semiring
 \ie~the sum of weighs 
of proper transition always yield the weight of a proper 
transition where proper means different from the addition unit.

Well-behaved semirings are precisely \emph{positively (partially) ordered
semirings} and it is well known that these admit the natural preorder:
\[ a \trianglelefteq b \defiff \exists c. a + c = b\]
which is respected by the semiring operation and has $0$ as bottom.
The natural preorder is the weaker preorder rendering a semiring
positively ordered (hence well-behaved) where weaked means that
for any such preorder $\sqsubseteq$ and elements $a,b$
\[a \trianglelefteq b \implies a \sqsubseteq b\text.\]
The converse holds only when also the other order is natural.
\begin{lemma}
The natural preorder is the weaker preorder rendering the semiring
well-behaved.
\end{lemma}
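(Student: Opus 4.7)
The plan is to verify the inclusion $\trianglelefteq \;\subseteq\; \sqsubseteq$ for every preorder $\sqsubseteq$ that renders $\mfk W$ well-behaved in the sense of Definition~\ref{def:good-semiring}. Fix such a $\sqsubseteq$ and elements $a,b \in W$ with $a \trianglelefteq b$, so pick $c \in W$ with $a + c = b$; the goal is to exhibit $a$ and $b$ as the $\rho$-values of a pair of nested minimal sets of finite paths in \emph{some} $\mfk W$-LTS, from which well-behavedness delivers $a \sqsubseteq b$ in one line.

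For the generic case $a, c \neq 0$, I would build a toy three-state system: one non-terminal state $x$ and two terminal successors $y_a, y_c$, with a single label $\alpha$ and weights $\rho(x,\alpha,y_a) = a$, $\rho(x,\alpha,y_c) = c$, and $\rho = 0$ elsewhere. Let $\pi_a$ and $\pi_c$ be the two finite paths out of $x$; their cones are disjoint since the paths diverge at the very first step, so $\Pi_1 \defeq \{\pi_a\}$ and $\Pi_2 \defeq \{\pi_a, \pi_c\}$ are both minimal in the sense of Lemma~\ref{lem:fpaths-support}. Hence $\rho(\Pi_1) = a$ and $\rho(\Pi_2) = a + c = b$; since $\Pi_1 \subseteq \Pi_2$, well-behavedness gives $\rho(\Pi_1) \sqsubseteq \rho(\Pi_2)$, i.e.\ $a \sqsubseteq b$.

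Two degenerate cases remain. If $c = 0$ then $b = a$ and the claim follows from reflexivity of $\sqsubseteq$. If $a = 0$, then $\pi_a$ would not be a proper path (its transition has weight $0$), so I instead build a one-branch system with the single finite path $\pi_c$ of weight $c = b$ and apply well-behavedness to $\emptyset \subseteq \{\pi_c\}$; since $\rho(\emptyset) = 0$, this yields $0 \sqsubseteq b$ as required. This is exactly the observation, already noted in the paragraph preceding the lemma, that $0$ must be the bottom element of any well-behaved preorder.

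The only subtle point -- more a clarification than a genuine obstacle -- is that Definition~\ref{def:good-semiring} is phrased relative to the $\rho$ of a fixed $\mfk W$-LTS, so to lift it to a property of $\mfk W$ itself one must read it as quantifying over all $\mfk W$-LTSs. Granting that reading, which is clearly the intended one, the argument is just the construction above; no algebraic manipulation inside the semiring beyond the identity $a + c = b$ is needed.
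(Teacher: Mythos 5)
The paper does not actually prove this lemma: it is stated bare, with the surrounding prose appealing to the ``well known'' identification of well-behaved semirings with positively ordered ones and to the standard fact that the natural preorder $\trianglelefteq$ is the weakest positive partial order. Your argument is therefore a genuinely different --- and more self-contained --- route: instead of invoking the general theory, you realise the witness $c$ in $a+c=b$ \emph{inside} a concrete $\mfk W$-LTS, exhibiting $a$ and $b$ as $\rho$-values of nested minimal path sets and reading off $a\sqsubseteq b$ directly from Definition~\ref{def:good-semiring}. The construction is sound: the two one-step paths $\pi_a,\pi_c$ have disjoint cones, so both $\{\pi_a\}$ and $\{\pi_a,\pi_c\}$ are minimal and their weights are $a$ and $a+c$ respectively, and your handling of the degenerate cases $a=0$ and $c=0$ is correct and matches the paper's own remark that $0$ must be the bottom. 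Your clarification that the definition must be read as quantifying over all $\mfk W$-LTSs is exactly the right reading and is worth making explicit, since the lemma is a statement about $\mfk W$ alone. What your approach buys is an elementary, verifiable proof from the paper's own definitions; what it does not address is the other half of the statement implicitly presupposed by the phrase ``rendering the semiring well-behaved'', namely that $\trianglelefteq$ itself is among the qualifying preorders --- the paper delegates this to the preceding discussion, and you may do the same, but it deserves a sentence.
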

Note that any idempotent semiring bares a natural preorder and hence
is well-behaved and the same holds for every semiring considered in
the examples illustrated in this paper (\cf~Section \ref{sec:weak-instances}).
For instance, some arithmetic semirings like $(\mbb R,+,0,\cdot,1)$
are not positively ordered because of negatives; moreover their are not
$\omega$-semirings (there is no limit for $1 + (-1) + 1 + (-1) \dots$).

\subsection{Weak \mfk W-bisimulation}
\label{sec:weak-wbisim}
Weak bisimulations weakens the notion of strong bisimulation
by allowing sequences of silent action before and after any observable one.
Then, we are now dealing with (suitable) paths instead of single transitions
and the states are compared on the bases of how opportune classes of states
are reached from these by means of the paths allowed (\ie~making some silent actions,
before and after an observable, if any).
Therefore, the notion of how a class state is reached and what paths can be used in
doing this is crucial in the definition of the notion of weak bisimulation.

For instance, for non-deterministic LTSs, the question of how and if a class
is reached coincides and then it suffices to find a (suitable) path leading to the 
class.  This allows weak bisimulation for non-deterministic LTSs to rely on the
reflexive and transitive closure of $\tau$-labelled transition of
a system (\cf~Definition~\ref{def:lts-weak}) to blur the distinction
between sequences of silent actions which can then be ``skipped''.
In fact, the $\tau$-closure at the base of \eqref{eq:tau-clos}
defines a new LTS over the same state space
of the previous and such that every weak bisimulation for this new system
is a weak bisimulation for the given one and vice versa.

In \cite{pbpk03:weakautom} Buchholz and Kemper extends this notion to
a class of automatons weighted over suitable semirings \ie~those
having operations commutative and idempotent (\eg~$w+w=w$).
This class includes interesting examples such as the boolean and 
bottleneck semiring (\cf~Section \ref{sec:other-semirings})
but not the semiring of non-negative real numbers and therefore
does not cover the cases of fully probabilistic systems.
Modulo some technicality connected to initial and accepting states,
their results can be extended to labelled transition systems
and holds also for LTSs weighted over suitable semirings.

Their interesting construction relies on the $\tau$-closure
of a system and it is known that this closure does not cover
the general case. For instance, it can not be applied to recover 
weak bisimulation for generative systems as demonstrated by 
Baier and Hermanns (\cf~\cite{baier98}). 
The following example gives an intuition of the issue.
\begin{example}\label{ex:class-reachability}\rm
\newcommand{\convexpath}[2]{
[   
    create hullnodes/.code={
        \global\edef\namelist{#1}
        \foreach [count=\counter] \nodename in \namelist {
            \global\edef\numberofnodes{\counter}
            \node at (\nodename) [draw=none,name=hullnode\counter] {};
        }
        \node at (hullnode\numberofnodes) [name=hullnode0,draw=none] {};
        \pgfmathtruncatemacro\lastnumber{\numberofnodes+1}
        \node at (hullnode1) [name=hullnode\lastnumber,draw=none] {};
    },
    create hullnodes
]
($(hullnode1)!#2!-90:(hullnode0)$)
\foreach [
    evaluate=\currentnode as \previousnode using \currentnode-1,
    evaluate=\currentnode as \nextnode using \currentnode+1
    ] \currentnode in {1,...,\numberofnodes} {
-- ($(hullnode\currentnode)!#2!-90:(hullnode\previousnode)$)
  let \p1 = ($(hullnode\currentnode)!#2!-90:(hullnode\previousnode) - (hullnode\currentnode)$),
    \n1 = {atan2(\x1,\y1)},
    \p2 = ($(hullnode\currentnode)!#2!90:(hullnode\nextnode) - (hullnode\currentnode)$),
    \n2 = {atan2(\x2,\y2)},
    \n{delta} = {-Mod(\n1-\n2,360)}
  in 
    {arc [start angle=\n1, delta angle=\n{delta}, radius=#2]}
}
-- cycle
}
Consider the \mfk W-LTS below.
\[\begin{tikzpicture}[auto,scale=.75,font=\small,
  state/.style={circle,draw=black,fill=white,
    minimum size=11pt, inner sep=.5pt,outer sep=1pt}]
    \node[state] (n0) at (0,1) {\(x\)};
    \node[state] (n1) at (2,2) {\(x_1\)};
    \node[state] (n2) at (4,2) {\(x_2\)};
    \node[state] (n3) at (6,1) {\(x_3\)};
    \node[state] (n4) at (2,0) {\(x_4\)};
    \node[state] (n5) at (4,0) {\(x_5\)};
    \node[state] (n6) at (8,1) {\(x_6\)};
    \draw[->]  (n0) to node [pos=.65] {\(b,w_1\)} (n1);
    \draw[->]  (n1) to node [] {\(b,w_2\)} (n2);
    \draw[->]  (n2) to node [pos=.45] {\(b,w_3\)} (n3);
    \draw[->]  (n3) to node [pos=.6] {\(a,w_7\)} (n5);
    \draw[->]  (n0) to node [pos=.6,swap] {\(a,w_4\)} (n4);
    \draw[->]  (n4) to node [swap] {\(b,w_5\)} (n5);
    \draw[->]  (n3) to node [] {\(a,w_6\)} (n6);
    \begin{pgfonlayer}{background}
      \draw[black!90,dotted,fill=gray!20] \convexpath{n2,n5,n4}{16pt};
    \end{pgfonlayer}
    \node[] (c) at ($(n2)!0.45!(n4)!0.2!(n5)$) {\(C\)};
  \end{tikzpicture}
\]
There are four finite paths going from state the $x$ to the class $C$. 
Their weights are:
\begin{align*}
\rho(x \xrightarrow{b} x_1\xrightarrow{b} x_2) &= w_1\cdot w_2\\
\rho(x \xrightarrow{b} x_1\xrightarrow{b} x_2\xrightarrow{b} x_3\xrightarrow{a} x_5) 
&= w_1\cdot w_2\cdot w_3\cdot w_7\\
\rho(x \xrightarrow{a} x_4) &= w_4\\
\rho(x \xrightarrow{a} x_4 \xrightarrow{b} x_5) &= w_4\cdot w_5
\end{align*}
Let us suppose to define the weight
of the set of these paths as the sum of its elements weights and
suppose that the system is generative; then the probability of
reaching $C$ from $x$ would exceed $1$. Likewise, in the case of a
stochastic system, the rate of reaching $C$ cannot consider paths
passing through $C$ before ending in it. If we are interested in how
$C$ is reached from $x$ with actions yielding a trace in the set
$b^*ab^*$, paths $w_1\cdot w_2$ and $w_5\cdot w_6$ are ruled out
because the first has a different trace and the second reaches $C$
before it ends.
\end{example}

Then, given a set of traces $T$, a state $x$ and a class of states $C$,
the set of finite paths of the given transition system reaching $C$
from $x$ with trace in $T$ that should be considered is:
\[\Lbag x,T,C \Rbag\defeq
  \Biggl\{\pi\;\Bigg|\;
  \parbox{6.5cm}{
    ${\pi\in \mrm{FPaths}(x)}$,
    ${\mrm{last}(\pi)\in C}$, ${\mrm{trace}(\pi) \in T}$,
    ${\forall \pi'\preceq\pi : \mrm{trace}(\pi') \in T \Rightarrow 
    \mrm{last}(\pi') \notin C}$
  }
\Biggr\}
\]
since these are all and only the finite executions of the system starting
going from $x$ to $C$ with trace in $T$ and never passing through $C$ except
for their last state. Redundancies highlighted in the example above
are ruled out since no execution path in this set is the prefix of an other
in the same set. In particular $\Lbag x,T,C \Rbag$ is the minimal support of the set of all
finite paths reaching $C$ from $x$ with trace in $T$:
\[\Lbag x,T,C \Rbag = \scone{\{\pi\mid\pi\mrm{FPaths}(x),\mrm{last}(\pi)\in C, \mrm{trace}(\pi) \in T\}}\text.\]
Therefore, weight functions can be consistently extended to these sets 
by point-wise sums:
\[
\rho(\Lbag x,T,C \Rbag) = \sum_{\pi\in\Lbag x,T,C \Rbag} \rho(\pi)\text.
\]
The sum is at most countable since $\mrm{FPaths}(x)$ so is
$\mrm{FPaths}(x)$ and $\Lbag x,T,C \Rbag\subseteq\mrm{FPaths}(x)$.
Then, the addition operation of the semiring will support countable
sums as discussed in Remark~\ref{rem:wlts-strong}.

When clear from the context, we may omit the bag brackets from $\rho(\Lbag x,T,C \Rbag)$.

We are now ready to state the notion of weak bisimulation of a labelled transition 
system weighted over any semiring admitting sums over (not necessarily every)
countable family of weights. 
The notion we propose relies on the weights of paths reaching every class
in the relation but making at most one observable and hence the importance
of defining sets of paths reaching a class consistently.

\begin{definition}[Weak \mfk W-bisimulation]
\label{def:wlts-weak}
Let $(X,A+\{\tau\},\rho)$ be a LTS weighted over the semiring
\mfk W.  A \emph{weak \mfk W-bisimulation} is an equivalence relation
$R$ on $X$ such that for all $x,x'\in X$, $(x,x') \in R$ implies that
for each label $a\in A$ and each equivalence class $C$
of $R$:
\begin{gather*}
\rho(x,\tau^*a\tau^*,C) = \rho(x,\tau^*a\tau^*,C)\\
\rho(x,\tau^*,C) = \rho(x,\tau^*,C)
\text.\end{gather*}
  States $x$ and $x'$ are said to be \emph{weak \mfk W-bisimilar} 
  (or just weak bisimilar), written  $x \approx_\mfk W x'$, if there exists a 
  weak \mfk W-bisimulation $R$ such that $x R x'$.
\end{definition}

The approach we propose applies to other behavioural equivalences.
For instance, delay bisimulation can be recovered for
WLTSs by simply considering in the above definition of weak bisimulations
sets of paths of the sort of $\Lbag x,\tau^*,C \Rbag$ and $\Lbag x,\tau^*a,C \Rbag$.
The notion of branching bisimulation relies on paths with the same traces 
of those considered for defining weak bisimulation but with some additional
constraint on the intermediate states. In particular, the states right
before the observable $a$ have to be in the same equivalence class and 
likewise the states right after it. Definition~\ref{def:wlts-weak}
is readily adapted to branching bisimulation by considering these particular
subsets of $\Lbag x,\tau^*a\tau^*,C \Rbag$.

\section{Examples of weak \mfk W bisimulation}
\label{sec:weak-instances}

In this Section we instantiate Definition~\ref{def:wlts-weak} to
the systems introduced in Section~\ref{sec:wlts} as instances of
LTSs weighted over commutative $\omega$-monoids.

\subsection{Non-deterministic systems} 
Let us recall the usual definition of weak bisimulation for LTS \cite{milner:cc}.
\begin{definition}[Weak non-deterministic bisimulation]
\label{def:lts-weak}
  An equivalence relation $R\subseteq X\times X$ is a \emph{weak (non-deterministic)
  bisimulation on $(X,A+\{\tau\},\rightarrow)$} iff for each $(x,x') \in R$, 
  label $\alpha \in A+\{\tau\}$ and equivalence class $C \in X/R$:
  \begin{equation}\label{eq:tau-clos}
    \exists y \in C. x\xRightarrow{\alpha}y \iff \exists y' \in C. x'\xRightarrow{\alpha}y'
  \end{equation}
  where $\mathop{\Rightarrow} \subseteq X\times (A\uplus\{\tau\}) \times X$ is the
  well-known $\tau$-reflexive and $\tau$-transitive closure of the transition relation 
  $\rightarrow$.
  Two states $x$ and $x'$ are said \emph{weak bisimilar} iff there exists 
  a weak non-deterministic bisimulation relation $\approx_n$ such that $x\approx_n y$.
\end{definition}
Clearly, a weak bisimulation is a relation on states
induced by a strong bisimulation of a suitable LTS with the same states and actions.
In particular, weak bisimulations for $(X,A+\{\tau\},\rightarrow)$  are strong
bisimulations for $(X,A+\{\tau\},\Rightarrow)$ and viceversa.
The transition system $(X,A+\{\tau\},\Rightarrow)$ is sometimes referred as 
saturated or weak (\eg~in \cite{jensen:thesis}). This observation is at the base of
some algorithmic and coalgebraic approaches to weak non-deterministic bisimulations
(\cf~Section~\ref{sec:cw-algorithm} and Section~\ref{sec:cat-view} respectively).

Section~\ref{sec:lts-as-wlts} illustrated that non-deterministic LTSs are
$\mfk 2$-WLTSs. The commutative monoid \mfk 2 is part of the 
\emph{boolean semiring} of logical values under disjunction and conjunction
$(\{\ltrue,\lfalse\},\lor,\lfalse,\land,\ltrue)$ which we shall also denote
as \mfk 2. Then, by straightforward application of the definitions, the notions of 
weak non-deterministic bisimulation and  weak \mfk 2-bisimulation coincide.
\begin{proposition}%
Definition \ref{def:lts-weak} is equivalent to Definition
\ref{def:wlts-weak} with $\mfk W = \mfk 2$.
\end{proposition}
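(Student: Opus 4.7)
The plan is to show that, under $\mfk W = \mfk 2$, each of the two equations in Definition~\ref{def:wlts-weak} collapses to the corresponding biconditional in Definition~\ref{def:lts-weak}. First I would observe that in the boolean semiring $\mfk 2 = (\{\ltrue,\lfalse\},\lor,\lfalse,\land,\ltrue)$ every proper transition carries weight $\ltrue$, since a transition exists precisely when its weight differs from $\lfalse = 0$. Consequently every finite path $\pi \in \mrm{FPaths}(x)$ has $\rho(\pi) = \ltrue$ as a finite conjunction of $\ltrue$'s, so $\rho(x,T,C) = \bigvee_{\pi \in \Lbag x,T,C\Rbag}\ltrue$ is just a boolean recording nonemptiness of $\Lbag x,T,C \Rbag$.

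Next I would establish the set-theoretic lemma that $\Lbag x,T,C \Rbag \neq \emptyset$ iff there exists some $\pi \in \mrm{FPaths}(x)$ with $\mrm{last}(\pi)\in C$ and $\mrm{trace}(\pi)\in T$. The forward direction is immediate; for the converse, given such a witness $\pi$ I would select the shortest prefix $\pi^* \preceq \pi$ whose trace still lies in $T$ and whose last state still lies in $C$. By minimality, no strictly shorter prefix of $\pi^*$ has both properties simultaneously, so $\pi^* \in \Lbag x,T,C \Rbag$. This is essentially a direct application of the minimal-support construction of Lemma~\ref{lem:fpaths-support}.

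Finally I would specialize $T$ to $\tau^* a \tau^*$ and then to $\tau^*$. Nonemptiness of $\Lbag x,\tau^* a \tau^*,C \Rbag$ matches exactly the existence of a weak transition $x \xRightarrow{a} y$ with $y \in C$; nonemptiness of $\Lbag x, \tau^*, C \Rbag$ matches the existence of $x \xRightarrow{\tau} y$ with $y \in C$, where the length-zero path accounts for the reflexive case $x \in C$. Since two boolean values agree iff they are simultaneously $\ltrue$, the equality $\rho(x,T,C) = \rho(x',T,C)$ is precisely the biconditional existence condition of Definition~\ref{def:lts-weak}, yielding the equivalence in both directions.

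The only mildly delicate step is the second one: the side-condition $\forall \pi' \preceq \pi :\mrm{trace}(\pi') \in T \Rightarrow \mrm{last}(\pi') \notin C$ in the definition of $\Lbag x,T,C \Rbag$ might suggest that a witness path to $C$ with trace in $T$ could fail to contribute to $\rho(x,T,C)$ if it passes through $C$ earlier with a $T$-compatible prefix. The key observation, which I would emphasize, is that truncating any such witness to its shortest qualifying prefix always lands inside $\Lbag x,T,C \Rbag$, so in the boolean case a single witness path from $x$ to $C$ with trace in $T$ always suffices to force $\rho(x,T,C) = \ltrue$.
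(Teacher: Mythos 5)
Your argument is correct and is essentially the paper's own (unstated) justification --- the ``straightforward application of the definitions'' it alludes to: over $\mfk 2$ every path weight is a finite conjunction of $\ltrue$'s, so $\rho(x,T,C)$ merely records nonemptiness of $\Lbag x,T,C\Rbag$, and your truncation-to-shortest-qualifying-prefix step (implicitly reading the side-condition in $\Lbag x,T,C\Rbag$ with \emph{strict} prefixes, as intended) correctly reduces that to the existence of a weak transition into $C$, with the length-zero path covering the reflexive case of $\Rightarrow$. One minor wording slip: two boolean values agree iff they are both $\ltrue$ \emph{or} both $\lfalse$, not only when simultaneously $\ltrue$; this does not affect your (correct) conclusion that equality of the two booleans is exactly the biconditional of Definition~\ref{def:lts-weak}.
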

It easy to check that a similar correspondence holds for branching and delay 
bisimulations.

\subsection{Probabilistic systems}
In the definition of weak bisimulation for fully probabilistic systems
we are interested in the probability of reach a class of states.  This
aspect is present also in the case of strong bisimulation, but things
become more complex for weak equivalences due to silent actions and
multi-step executions.  Moreover, $\sigma$-additivity is no longer
available since the probability of reaching a class of states is not
the sum of the probabilities of reaching every single state in that
class.  (On the contrary, a class is reachable if any of its state is
so which is the property we are interested in when dealing with
non-deterministic systems.)

Weak bisimulation for fully probabilistic systems was introduced by
Baier and Hermanns in \cite{baier97:cav,baier98}. Here we
recall briefly their definition; we refer the reader to
\emph{loc. cit.}  for a detailed presentation.

\begin{definition}[Weak probabilistic bisimilarity {\cite{baier97:cav,baier98}}]
\label{def:plts-weak}
  Given a fully probabilistic system $(X,A+\{\tau\},\mrm P)$,
  an equivalence relation $R$ on $X$ is a \emph{weak (probabilistic)  
  bisimulation} iff for $(x,x') \in R$,
  for any $a \in A$ and any equivalence class $C \in X/R$:
  \begin{gather*}
    \mrm{Prob}(x,\tau^*a\tau^*,C) = \mrm{Prob}(x',\tau^*a\tau^*,C) \\
    \mrm{Prob}(x,\tau^*,C) = \mrm{Prob}(x',\tau^*,C)
  \text.\end{gather*}
  Two states $x$ and $x'$ are said \emph{weak bisimilar} iff there exists 
  a weak probabilistic bisimulation relation $\approx_p$ such that $x\approx_p y$.
\end{definition}

The function \mrm{Prob} is the extension over finite execution paths of the
unique probability measure induced by $\mrm{P}$ over the $\sigma$-field of
the basic cylinders of complete paths. 

\begin{proposition}%
Definition \ref{def:lts-weak} is equivalent to Definition
\ref{def:wlts-weak} with $\mfk W =  (\mathbb R_0^+,+,0,\cdot,1)$.
\end{proposition}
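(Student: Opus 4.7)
The plan is to identify the weight function $\rho$ of the $(\mathbb R_0^+,+,0,\cdot,1)$-WLTS with the transition probability function $\mrm P$ (as already noted in Section~\ref{ex:plts-as-wlts}, fully probabilistic systems are precisely the constrained $\mathbb R_0^+$-LTSs satisfying the normalisation axiom), and then to verify that under this identification the two quantities appearing on each side of Definitions~\ref{def:plts-weak} and \ref{def:wlts-weak} are literally the same real number. This reduces the proof to showing $\rho(\Lbag x, \tau^*, C \Rbag) = \mrm{Prob}(x, \tau^*, C)$ and $\rho(\Lbag x, \tau^*a\tau^*, C \Rbag) = \mrm{Prob}(x, \tau^*a\tau^*, C)$ for every state $x$ and every $R$-class $C$.

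First I would check the correspondence at the level of a single finite path: for $\pi = x_0 \xrightarrow{a_1} x_1 \dots \xrightarrow{a_n} x_n$, the semiring extension of $\rho$ yields $\rho(\pi) = \prod_{i=1}^n \mrm P(x_{i-1},a_i,x_i)$, which coincides with the measure that Baier--Hermanns assign to the basic cylinder $\cone\pi$ generated by $\pi$. Then I would invoke Lemma~\ref{lem:cpath-semiring} to observe that the $\sigma$-algebra built from our cones is precisely the $\sigma$-field of basic cylinders over $\mrm{CPaths}(x)$ used in \cite{baier98}; since the two measures agree on this generating semiring of sets, by Carathéodory extension they agree on the whole $\sigma$-algebra.

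Next, I would show that for $T \in \{\tau^*,\tau^*a\tau^*\}$ the set $\Lbag x, T, C \Rbag$ is already minimal in the sense of Lemma~\ref{lem:fpaths-support}: two distinct paths in it cannot be prefix-comparable, because both end at their \emph{first} visit to $C$ with a trace in $T$. Hence
\[
\rho(\Lbag x,T,C\Rbag) \;=\; \sum_{\pi \in \Lbag x,T,C\Rbag} \rho(\pi) \;=\; \sum_{\pi \in \Lbag x,T,C\Rbag} \mrm{Prob}(\cone\pi),
\]
and by countable additivity of $\mrm{Prob}$ (which matches countable additivity of sums in the $\omega$-semiring $\mathbb R_0^+$) this equals $\mrm{Prob}\!\bigl(\bigcup_{\pi} \cone\pi\bigr) = \mrm{Prob}(x,T,C)$, the Baier--Hermanns quantity. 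Plugging these equalities into the two conditions of Definition~\ref{def:wlts-weak} yields exactly the two conditions of Definition~\ref{def:plts-weak}, and vice versa, establishing the equivalence.

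The main obstacle is the bookkeeping that links our $\scone{\cdot}$ construction to the first-hit semantics implicit in Baier--Hermanns' \mrm{Prob}: I must verify that $\Lbag x, T, C \Rbag$ really captures a minimal set of finite witnesses whose cones are pairwise disjoint and whose union coincides, up to a null set, with the event ``reach $C$ via a trace in $T$''. Once this combinatorial identification is in place (Example~\ref{ex:class-reachability} already provides the intuition that paths overshooting $C$ must be excluded), the numerical equality follows mechanically from associativity of multiplication in $\mathbb R_0^+$ and $\sigma$-additivity; no probabilistic-specific ingredient is needed beyond what $\mathbb R_0^+$ already supplies as an $\omega$-semiring.
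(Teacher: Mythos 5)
Your proof is correct and follows essentially the same route as the paper's: identify the extension of $\mrm P$ over the cone $\sigma$-algebra of $\mrm{CPaths}(x)$ (Lemma~\ref{lem:cpath-semiring}) with $\mrm{Prob}$, note that $\Lbag x,T,C\Rbag$ is a minimal, pairwise prefix-incomparable support so that $\rho$ of it is the plain sum of path weights, and conclude by $\sigma$-additivity. The paper compresses all of this into the single chain $\mrm{Prob}(\Pi)=\mrm{Prob}(\scone\Pi)=\mrm P(\scone\Pi)=\mrm P(\Pi)$, so your version is simply a more explicit rendering of the same argument.
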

The function $\mrm{P}$ is a weight function such that $\mrm{P}(x,\hole,\hole)$
is a probability measure (or the constantly $0$ measure) which extends to the unique 
$\sigma$-algebra on $\mrm{CPaths}(x)$ (Lemma~\ref{lem:cpath-semiring}). This defines 
precisely $Prob$. In particular, for any $x \in X$ and $\Pi \subseteq \mrm{FPaths}(x)$ 
$Prob(\Pi) = Prob(\scone{\Pi}) = \mrm{P}(\scone{\Pi}) = \mrm{P}(\Pi)$ where $\mrm{P}$
is seen as the weigh function of a $\mbb R^+_0$-LTS. 

\subsection{Stochastic systems}
As we have seen in Section~\ref{sec:rts-wlts}, stochastic transition
systems can be captured as WLTSs over $(\overline{\mbb R}^+_0,+,0)$ by describing
the exponential time distributions of a CTMC by their rates
\cite{klinS08}.  Unfortunately, this does not extend to paths because
the sequential composition of two exponential distributions does not
yield an exponential distribution, and hence it can not be represented
by an element of $\overline{\mbb R}^+_0$.  Moreover, there are stochastic systems
(\eg~TIPP \cite{gotz93:tipp}, SPADES \cite{ak2005:spades}) whose
transition times follow generic probability distributions.

To overcome this shortcoming, in this Section we introduce a semiring of
weights called \emph{stochastic variables} which allows to express
stochastic transition system with generic distributions as WLTSs. Then
the results of this theory can be readily applied to define various
behavioural equivalences, ranging from strong bisimulation to trace
equivalence, for all these kind of systems. In particular, we define
\emph{weak stochastic bisimulation} 
by instantiating Definition~\ref{def:wlts-weak} on the semiring of 
stochastic variables. 

The carrier of the semiring structure we are defining is the set \mbb T
of \emph{transition-time random variables} \ie~random variables on 
the nonnegative real numbers (closed with infinity) which describes 
the nonnegative part of the line of time.

Given two (possibly dependent) random variables $X$ and $Y$
from \mbb T, let $\min(X,Y)$ be the minimum random variable
yielding the minimum between $X$ and $Y$. If the variables $X$ and
$Y$ characterize the time required by two transitions then their 
combined effect is defined by the stochastic race between the
two transitions; a race that is ``won'' by the transition completed earlier
and hence the minimum. For instance, given two stochastic transitions 
${x \xrightarrow{X} x'}$ and ${y\xrightarrow{Y}y'}$ the transition time
for their ``combination'' going from $\{x,y\}$ to $\{x',y'\}$
is characterized by the random variable $\min(X,Y)$ \ie~the overall
time is given by the first transition to be completed on the specific run.

Minimum random variables defines the operation $\min$ over \mbb T with a 
constantly $+\infty$ continuous random variable $\mcl T_{+\infty}$
(its density is the Dirac delta function $\delta_{+\infty}$) as the unit.
Random variables of the sort of $\mcl T_{+\infty}$ are self-independent
and since they always always yield $+\infty$ we shall make no distinction
between them and refer to \emph{the} $\mcl T_{+\infty}$ random variable.
In general, time-transition variables do not have to be self-independent since
the events they describe usually depends on themselves. Intuitively, 
it is like racing against ourself \ie~we are the only racer
and therefore $\min(X,X) = X$. Formally: 
\[
\mbb P(\min(X,X) > t) 
= \mbb P(X > t \cap X > t)
= \mbb P(X > t)\cdot \mbb P(X > t\mid X > t) = \mbb P(X > t)\text.
\]

Let $X$ and $Y$ be two continuous random variables from \mbb T with probability
density functions $f_X$ and $f_Y$ respectively. The density $f_{\min(X,Y)}$
describing $\min(X,Y)$ is:
\begin{align*}
f_{\min(X,Y)}(z) = f_X(z) + f_Y(z) - f_{X,Y}(z,z)\text.
\end{align*}
When $X$ and $Y$ are independent (but not necessarily \emph{i.i.d.}) 
$f_{\min(X,Y)}$ can be simplified as:
\[
f_{\min(X,Y)}(z) = 
f_X(z)\cdot\int_z^{+\infty} f_Y(y)\dif y + 
f_Y(z)\cdot\int_z^{+\infty} f_X(x)\dif x
\text.
\]
Intuitively, the
likelihood that one variable is the minimum must be ``weighted'' by
the probability that the other one is not.
In particular, for independent exponentially distributed variables
$X$ and $Y$, $\min(X,Y)$ is exponentially distributed and its
rate is the sum of the rates of the negative exponentials characterizing
$X$ and $Y$. Therefore, the commutative monoid 
$(\mbb T,\min,\mcl T_{+\infty})$ faithfully generalizes the monoid 
$(\overline{\mbb R}^+_0,+,0)$ used in Section~\ref{sec:rts-wlts} to capture 
CTMCs as WLTSs

During the execution of a given path, the time of every transition in the sequence
sums to the overall time. Therefore, the transition time for \eg~$x \xrightarrow{X} y \xrightarrow{Y} z$ is characterized by the random variable 
$X+Y$ sum of the variable characterizing the single transitions composing the path.
Sum and the constantly 0 continuous variable $\mcl T_0$ 
define a commutative monoid over \mbb T. The operation has to be commutative 
because the order a path imposes to its steps does not change the total time of 
execution.

Let $X$ and $Y$ be two continuous random variables from \mbb T with probability
density functions $f_X$ and $f_Y$ respectively. The probability density 
function $f_{X+Y}$ is:
\[
  f_{X+Y}(t) = \int_0^t f_{X,Y}(s,t-s)\dif
  s
\]
and, if $X$ and $Y$ are independent (but not necessarily \emph{i.i.d.}), $f_{X+Y}$ is the convolution:
\[
  f_{X+Y}(t) = \int_0^t f_X(s)\cdot f_Y(t-s)\dif
  s\text.
\]

It is easy to check that sum distributes over
minimum:
\[X+\min(Y,Z) = \min(X+Y,X+Z)\]
by taking advantage of the latter operation being idempotent.
Then, because of sum being commutative, 
left distributivity implies right one (and vice versa).
Thus $\mfk S \defeq (\mbb T,
\min,\mcl T_{+\infty},+,\mcl T_{0})$ is a (commutative and idempotent)
semiring and stochastic systems can be read as \mfk S-LTS.
This induces immediately a strong bisimulation (by instantiating
Definition~\ref{def:wlts-strong}) which corresponds to strong stochastic
bisimulations on rated LTS (Definition \ref{def:rts-bisim}).
Moreover, following Definition \ref{def:wlts-weak}, we can readily
define the weak stochastic bisimulation as the weak \mfk
S-bisimulation.

In literature there are some (specific and ad hoc) notions of weak
bisimilarity for stochastic systems.  The closest to our is the one
given by Bernardo et al.~for CTMCs extended with passive rates and
\emph{instantaneous} actions \cite{mb2007:ictcs,mb2012:qapl}. Their
definition is finer than our weak \mfk S-bisimulation since they allow
to merge silent actions only when these are instantaneous and hence
unobservable also \wrt~the time.  Instead, in our definition sequences
of $\tau$ actions are equivalent as long as their overall ``rates'' are
the same (note that in general, the convolution of exponentially distributed
random variables is no longer exponentially distributed but an hyper-exponential).
In \cite{mb2012:qapl}, Bernardo et al. relaxed the definition given in 
\cite{mb2007:ictcs} to account also for non-instantaneous $\tau$-transitions.
However, to retain exponentially distributed variables, they approximate
hyper-exponentials with exponentials with the same average. This approach
allows them to obtain a saturated system that still is a CTMC 
but loosing precision since, in general, the average is the only momentum preserved
during the operation. On the opposite, our approach does not introduce
any approximation.

In \cite{ln2005:ictac} L{\'o}pez and N{\'u}{\~n}ez proposed a definition of weak
bisimulation for stochastic transition systems with generic distributions. Their 
(rather involved) definition is a refinement of the notion they previously 
proposed in \cite{ln2004:fac} and relies on the reflexive and transitive closure
of silent transitions. However, their definition of strong 
bisimulation does not correspond to the results from the theory of
WLTSs, so neither the weak one does.

\subsection{Other examples}
\label{sec:other-semirings}
The definition of weak \mfk W-bisimulation applies to many other
situations. In the following we briefly illustrate some interesting cases.

\paragraph{Tropical and arctic semirings}
These semirings are used very often in optimization problems,
especially for task scheduling and routing problems. Some examples
are: $(\overline{\mbb R},\min,+\infty,+,0)$; $(\overline{\mbb R},\max,-\infty,+,0)$;
$(\overline{\mbb R},\min,+\infty,\max,-\infty)$.

In these contexts, weak bisimulation would allow to abstract from
``unobservable'' tasks \eg~internal tasks and treat a cluster of
machines as a single one, reducing the complexity of the problem.

\paragraph{Truncation semiring} 
$(\{0,\dots,k\},\max,0,\min\{\hole+\hole,k\},k)$.  It is variant of the
above ones, and it is used to reason ``up-to'' a threshold $k$.  A
weak bisimulation for this semiring allows us to abstract from how the
threshold is violated, but only if this happens.

\paragraph{Probabilistic semiring}
Another semiring used for reasoning about probabilistic events is
$([0,1],\max,0,\cdot,1)$. This is used to model the maximum likelihood
of events, \eg~for troubleshooting, diagnosis, failure forecasts,
worse cases, etc.  A weak bisimulation on this semiring allows to
abstract from ``unlikely'' events, focusing on the most likely ones.

\paragraph{Formal languages}
A well-known semiring is that of formal languages over a given alphabet
$(\wp(\Sigma^*),\cup,\emptyset,\circ,\varepsilon)$.  Here, a weak
bisimulation is a kind of \emph{determinization} \wrt~to words assigned to 
$\tau$ transitions. %

\section{A parametric algorithm for computing weak \mfk
  W-bisim\-u\-la\-tions}\label{sec:cw-algorithm}
In this section we present an algorithm for computing weak \mfk
W-bisimulation equivalence classes which is parametric in the semiring
structure \mfk W.  Being parametrized, the same algorithm can be used
in the mechanized verification and analysis of many kinds of systems.
This kind of algorithms is often called \emph{universal} since they do
not depend on any particular numerical domain nor its machine
representation.
In particular, algorithms parametric over a semiring structure have
been successfully applied to other problems of computer science,
especially in the field of system analysis and optimization (\cf~\cite{lmrs2011}).

The algorithm we present is a variation of the well-known
Kanellakis-Smolka's algorithm for deciding strong non-deterministic
bisimulation \cite{ks1990:ic}.
Our solution is based on the same refinement technique used for the
\emph{coarsest stable partition}, but instead of ``strong''
transitions in the original system we consider ``weakened'' or
saturated ones.  The idea of deciding weak bisimulation by computing
the strong bisimulation equivalence classes for the saturated version
of the system has been previously and successfully used \eg~for
non-deterministic or probabilistic weak bisimulations \cite{baier98}.
The resulting complexity is basically that of the coarsest stable
partition problem plus that introduced by the construction of the
saturated transitions.  The last factor depends on the properties and
kind of the system and, in our case, on the properties of the semiring
\mfk W (the algorithm and its complexity will be discussed with more
detail in Section~\ref{sec:cw-algo-and-complexity}).

Before outlining the general idea of the algorithm let us introduce
some notation.  For a finite set $X$ we denote by $\mcl{X}$ a
partition of it \ie~a set of pairwise disjoint sets $B_0,\dots,B_n$
covering $X$:
\[
  X = \biguplus \mcl X = \biguplus \{B_0,\dots,B_n\}\text.
\]
We shall refer to the elements of the partition \mcl X as \emph{blocks} or 
\emph{classes} since every partition induces an equivalence relation 
$\bigcup_{B\in\mcl X} B\times B$ on $X$ and vice versa. 

Given a finite \mfk W-LTS $(X,A+\{\tau\},\rho)$ the general idea
for deciding weak \mfk W-bisimulation by partition refinement is 
to start with a partition of the states $\mcl{X}_0$
coarser than the weak bisimilarity relation \eg~$\{X\}$ and then successively
refine the partition with the help of a \emph{splitter} (\ie~a witness
that the partition is not stable \wrt~the transitions). This process
eventually yields a partition $\mcl{X}_k$ 
being the set of equivalence classes of the weak bisimilarity. 
A splitter of a partition $\mcl X$ is a pair made of an action and a class of $\mcl X$
that violates the condition for $\mcl{X}$ to be a weak bisimulation. 
Reworded, a pair $\langle \alpha,C\rangle \in (A+\{\tau\})\times\mcl X$ is a splitter
for $\mcl X$ if, and only if, there exist $B \in \mcl X$ and $x,y\in B$ such that:
\begin{equation}\label{eq:cw-split-check}
  \rho(x,\hat\alpha,C) \neq \rho(y,\hat\alpha,C)
\end{equation}
where $\hat\alpha$ is a short hand for the sets of traces $\tau^*$ 
and $\tau^*a\tau^*$ when $\alpha=\tau$ and $\alpha = a \in A$  respectively.
Then $\mcl X_{i+1}$ is obtained from $\mcl X_i$ splitting
every\footnote{In Kanellakis and Smolka's algorithm, only the block
  $B$ is split but in our case we need to evaluate every block anyway
  because of saturation, \cf~Section~\ref{sec:cw-saturation}.} $B \in
\mcl X_i$ accordingly to the selected splitter $\langle
\alpha,C\rangle$.
\begin{equation}
  \label{eq:cw-refine}
  \mcl X_{i+1} \defeq \bigcup\left\{
  B/{\underset{\scriptscriptstyle{\alpha,C}}{\approx}}\mid B \in \mcl X_i\right\}
\end{equation}
where $\underset{\scriptscriptstyle{\alpha,C}}{\approx}$ is the equivalence relation
on states induced by the splitter and such that:
\[
  x \underset{\scriptscriptstyle{\alpha,C}}{\approx} y 
  \defiff\rho(x,\hat\alpha,C) = \rho(y,\hat\alpha,C)
\text.
\]
Note that the block $B$ can be split in more than two parts (which is the case 
of non-deterministic systems) since splitting depends on weights of outgoing 
weak transitions.

\subsection{Computing weak transitions}
\label{sec:cw-saturation}

The algorithm outlined above follows the classical approach to the
coarsest stable partition problem where stability is given in terms of
weak weighted transitions like $\Lbag x,\tau^*,C \Rbag$ (and in general weighted
sets of paths \eg~$\Lbag x,T,C \Rbag$) but nothing is assumed on how these
values are computed.  In this section, we show how weights of weak
transitions can be obtained as solutions of systems of linear
equations over the semiring \mfk W.  Clearly, for some specific cases
and sets of paths, there may be more efficient \emph{ad-hoc} technique
(\eg~saturated transitions can be precomputed
for non-deterministic LTSs)
however the linear system at the core of our algorithm is a general
and flexible solution which can be readily adapted to other
observational equivalences (\cf~Example~\ref{ex:delay-bisim}).

Let $C$ be a class. For every $x \in X$ and 
$\alpha \in A+\{\tau\}$ let $x_\alpha$ be a variable with domain 
the semiring carrier. Intuitively, once solved, these will represent:
\[
x_\tau = \rho(\Lbag x,\tau^*,C\Rbag) \qquad
x_a = \rho(\Lbag x,\tau^*a\tau^*,C\Rbag)
\]
The linear system is given by the equation families
\eqref{eq:cw-ls-in-class}, \eqref{eq:cw-ls-only-tau} and \eqref{eq:cw-ls-action}
which capture exactly the finite paths yielding the cones covering weak transitions.
\begin{alignat}{3}
\label{eq:cw-ls-in-class}
x_\tau &= 1 
        && \quad\mbox{ for } x \in C \\
\label{eq:cw-ls-only-tau}
x_\tau &= {\sum_{y \in X} \rho(x\xrightarrow{\tau}y)\cdot y_\tau} 
       && \quad\mbox{ for } x \notin C \\
\label{eq:cw-ls-action}
x_a    &= {\sum_{y \in X} \rho(x\xrightarrow{a}y) \cdot y_\tau}  + 
         {\sum_{y \in X} \rho(x\xrightarrow{\tau}y) \cdot y_a}
       &&
\end{alignat}
The system is given as a whole but it can
be split in smaller sub-systems improving the efficiency of the resolution
process. In fact, unknowns like $x_a$ depend only on those indexed by $\tau$
or $a$ and unknowns like $x_\tau$ depend only on those indexed by $\tau$.
Hence instead of a system of $|A+\{\tau\}|\cdot|X|$ equations and unknowns,
we obtain $|A+\{\tau\}|$ systems of $|X|$ equations and unknowns by
first solving the sub-system for $x_\tau$ and then a separate
sub-system of each action $a \in A$ (where $x_\tau$ are now constant).

\begin{example}[Delay bisimulation]
\label{ex:delay-bisim}
Delay bisimulation is defined at the general level of WLTSs
simply by replacing $\Lbag x,\tau^*a\tau^*,C\Rbag$ with $\Lbag x,\tau^*a,C\Rbag$
in Definition~\ref{def:wlts-weak}. Then, delay bisimulation
equivalence classes can be computed with the same algorithm simply by
changing the saturation part at its core. Weights
of sets like $\Lbag x,\tau^*a,C\Rbag$ are computed as the solution to
the linear equation system:
\[
x_a = {\sum_{y \in X} \rho(x\xrightarrow{\tau}y)} \cdot y_a+ 
      {\sum_{y \in C} \rho(x\xrightarrow{a}y)}
\text.\]
\end{example}

\subsubsection{Solvability}
Decidability of the algorithm depends on the solvability the equation
system at its core. In particular, on the existence and uniqueness of the
solution. In section we prove that this holds for every positively
ordered $\omega$-semiring. The results can be extended to
$\sigma$-semirings provided that their $\sigma$-algebra covers
the countable families used by Theorem~\ref{thm:uniq-solution}.

The linear equation systems under consideration bare a special form:
they have exactly the same number of equations and unknowns
(say $n$)
and every unknown appears alone on the left side
of exactly one equation. Therefore, these systems are
defining an operator 
\begin{equation}\label{eq:sys-matrix}
F(x) = M\times x + b
\end{equation}
over the space of $n$-dimensional vectors $W^n$
where $M$ and $b$ are a $n$-dimensional matrix
and vector respectively defined by the equations of the
system.
Then, the solutions of the system are precisely 
the fix-points of the operator $F$ and since the number of 
equations and unknowns is the same, if $F$ has a fix-point, it is unique.

Let the semiring $\mfk W$ be positively ordered. 
These semirings admit a natural preorder $\trianglelefteq$ which subsumes 
any preorder $\sqsubseteq$ respecting the structure of the semiring;
hence we restrict ourselves to the former.
The point-wise extension of $\trianglelefteq$ to $n$-dimensional vectors
defines the partial order with bottom $(W^n,\dot\trianglelefteq,0^n)$;
suprema are lifted pointwise from $(W,\trianglelefteq,0)$
where are sum-defined. Therefore, $\omega$-chains suprema exists
only under the assumption of addition over at most countable families and viceversa.
\begin{lemma}\label{lem:cpo}
$(W^n,\dot\trianglelefteq,0^n)$ is $\omega$-complete
iff $\mfk W$ admits countable sums.
\end{lemma}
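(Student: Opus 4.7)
The plan is to reduce both directions to componentwise statements about the natural preorder on $\mfk W$. I first observe that in a positively ordered semiring, an $\omega$-chain $(a_k)_{k \in \mbb N}$ is essentially the same data as a countable family: if $a_k \trianglelefteq a_{k+1}$, then by definition of the natural preorder there exists $d_{k+1}$ with $a_{k+1} = a_k + d_{k+1}$, so that $a_k = a_0 + d_1 + \dots + d_k$ is the $k$-th partial sum of $\{d_i\}_{i \in \mbb N}$ (setting $d_0 = a_0$); conversely, every countable family yields an $\omega$-chain of partial sums. Hence existence of $\sup$ for every $\omega$-chain in $W$ and existence of sums over every countable family in $W$ are equivalent, with the supremum of the partial sums coinciding with the countable sum.

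For the implication ``$(W^n, \dot\trianglelefteq, 0^n)$ $\omega$-complete $\Rightarrow$ $\mfk W$ admits countable sums'', I would take any $\omega$-chain $(a_k)_{k\in\mbb N}$ in $W$ and embed it as $(a_k, 0, \dots, 0)_{k \in \mbb N}$, which is an $\omega$-chain in $W^n$ by the definition of $\dot\trianglelefteq$. Let $v = (v_1, \dots, v_n)$ be its supremum. Then $v$ is an upper bound, so $a_k \trianglelefteq v_1$ for all $k$. For any other upper bound $u \in W$ of $(a_k)$, the vector $(u, 0, \dots, 0)$ is an upper bound of the embedded chain in $W^n$ (noting $0$ is the bottom element), so $v \dot\trianglelefteq (u, 0, \dots, 0)$, giving $v_1 \trianglelefteq u$. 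Thus $v_1 = \sup_k a_k$ exists in $W$, and by the preliminary observation $\mfk W$ admits countable sums.

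For the converse, given an $\omega$-chain $(v^{(k)})_{k \in \mbb N}$ in $W^n$, its projection onto each coordinate $j \in \{1, \dots, n\}$ is an $\omega$-chain in $W$ because $\dot\trianglelefteq$ is defined pointwise. By the hypothesis each projection admits a supremum $s_j \in W$, and I claim $s = (s_1, \dots, s_n)$ is the supremum in $W^n$: pointwise $v^{(k)}_j \trianglelefteq s_j$ gives $v^{(k)} \dot\trianglelefteq s$; any upper bound $u$ satisfies $v^{(k)}_j \trianglelefteq u_j$ coordinatewise, whence $s_j \trianglelefteq u_j$ by leastness, and so $s \dot\trianglelefteq u$.

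The main obstacle is simply nailing down the equivalence between the algebraic notion ``admits countable sums'' and the order-theoretic notion ``$\omega$-complete'', which is where the positive ordering is indispensable: without it, one cannot pass freely between a countable family and the $\omega$-chain of its partial sums (negative or incomparable elements would spoil monotonicity). Once this equivalence is in hand, the coordinatewise structure of $\dot\trianglelefteq$ makes both directions essentially book-keeping.
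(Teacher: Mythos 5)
Your proof is correct and follows the same route the paper takes (the paper only sketches this lemma in the sentence preceding it): suprema in $W^n$ are computed pointwise, and in $W$ the supremum of an $\omega$-chain and the sum of a countable family are interconvertible via partial sums, using the definition of the natural preorder $a \trianglelefteq b \iff \exists c.\, a + c = b$. Your write-up merely makes explicit the embedding/projection bookkeeping between $W$ and $W^n$ that the paper leaves implicit.
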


The operator $F$ manipulates its arguments only by additions
and constant multiplications which respect the natural order.
Thus $F$ is monotone with respect to $\dot\trianglelefteq$.
Moreover, $F$ preserves $\omega$-chains suprema (and in general
$\omega$-families) because suprema for $\trianglelefteq$ are
defined by means of additions and the order is lifted point-wise.
\begin{lemma}\label{lem:scott}
The operator $F$ over $(W^n,\dot\trianglelefteq,0^n)$ is Scott-continuous.
\end{lemma}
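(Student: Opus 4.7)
The plan is to verify the two defining properties of Scott-continuity on the $\omega$-cpo $(W^n, \dot\trianglelefteq, 0^n)$ established by Lemma~\ref{lem:cpo}: monotonicity and preservation of suprema of $\omega$-chains. Both reduce, via pointwise lifting, to properties of the scalar operations of $\mfk W$, so the argument essentially rests on the positively-ordered $\omega$-semiring structure and on the shape of $F$ as an affine operator $F(x) = M \times x + b$.

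First I would dispatch monotonicity. The operator $F$ is assembled out of two primitives: addition with the constant vector $b$ and left multiplication by the constant matrix $M$. In a positively ordered semiring, $a \trianglelefteq b$ unfolds to $\exists d.\, a + d = b$, whence $(a + c) + d = b + c$ yields $a + c \trianglelefteq b + c$, and left distributivity yields $c \cdot a + c \cdot d = c \cdot b$ hence $c \cdot a \trianglelefteq c \cdot b$. Lifting these preservations coordinate-wise gives monotonicity of $F$ with respect to $\dot\trianglelefteq$.

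Second I would prove preservation of $\omega$-chain suprema. Let $(x^{(k)})_{k \in \mbb N}$ be an $\omega$-chain in $W^n$ with supremum $x^\infty$; by Lemma~\ref{lem:cpo} this supremum is computed componentwise, $x^\infty_j = \sup_k x^{(k)}_j$. Expanding $F$ coordinate-wise yields
\[
  F(x^\infty)_i = \sum_{j} M_{ij} \cdot \sup_k x^{(k)}_j + b_i,
\]
so the target identity $F(x^\infty)_i = \sup_k F(x^{(k)})_i$ amounts to commuting the supremum past (a) the finite sum over $j$, (b) each constant left-multiplication by $M_{ij}$, and (c) the translation by $b_i$. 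Items (a) and (c) follow from the fact that $+$ is monotone in each argument and, on positively ordered $\omega$-semirings, preserves binary (and hence finite) suprema: this is immediate since the natural-order suprema are themselves defined through countable additions of increments, and $+$ is commutative and associative across them.

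The main obstacle is (b), namely showing that constant left-multiplication preserves $\omega$-suprema; everything else is bookkeeping. In the natural order, $\sup_k y^{(k)}$ is witnessed by a countable sum $y^{(0)} + \sum_{k \geq 0} d^{(k)}$ with $d^{(k)}$ any element satisfying $y^{(k)} + d^{(k)} = y^{(k+1)}$, and left distributivity of $\cdot$ over this countable sum, a property inherent to an $\omega$-semiring, transfers the $\sup$ across multiplication: $c \cdot \sup_k y^{(k)} = \sup_k (c \cdot y^{(k)})$. Once this is granted, assembling (a), (b), (c) componentwise gives $F(\sup_k x^{(k)}) = \sup_k F(x^{(k)})$, which together with monotonicity concludes Scott-continuity of $F$.
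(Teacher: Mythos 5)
Your proposal is correct and follows essentially the same route as the paper, whose (very terse) justification is exactly that $F$ is built only from additions and constant multiplications respecting the natural order, and that $\omega$-suprema in $(W^n,\dot\trianglelefteq,0^n)$ are sum-defined and lifted pointwise, so they commute with these operations. Your elaboration of the key step --- distributing constant multiplication over the countable sum witnessing the supremum --- is a useful expansion of what the paper leaves implicit, but it is the same argument.
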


Finally, we can state the main result of this Section from which decidability
follows as a corollary.
\begin{theorem}\label{thm:uniq-solution}
Systems in the form of \eqref{eq:sys-matrix}
have unique solutions if the underlying semiring is
well-behaved and $\omega$-complete.
\end{theorem}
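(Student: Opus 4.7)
The plan is to derive the result from Kleene's fixed-point theorem. By Lemma~\ref{lem:cpo}, $\omega$-completeness of $\mfk W$ makes $(W^n,\dot\trianglelefteq,0^n)$ an $\omega$-cpo with least element, and by Lemma~\ref{lem:scott} the operator $F$ is Scott-continuous on it. Kleene's theorem then yields a least fixed point
\[
x^{*} \defeq \bigsqcup_{n\in\omega} F^{n}(0^{n}),
\]
which proves existence of a solution to \eqref{eq:sys-matrix}.

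To identify $x^{*}$ with the quantity the algorithm is meant to compute, I would unfold the iterates: a straightforward induction gives $F^{n}(0^{n})=\bigl(\sum_{k<n} M^{k}\bigr) b$, so passing to the supremum yields the matrix Kleene star $x^{*}=M^{*}b$. Using distributivity of multiplication over countable addition, together with the fact that in \eqref{eq:cw-ls-in-class}--\eqref{eq:cw-ls-action} the matrix $M$ encodes the $\tau$-transition coefficients and $b$ the boundary contribution into $C$, the components of $M^{*}b$ coincide component-wise with the path sums $\rho(\Lbag x,\tau^{*},C\Rbag)$, and analogously for $\tau^{*}a\tau^{*}$, as constructed via minimal supports in Section~3.

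For uniqueness, one half is easy: monotonicity of $F$ and $0^{n}\dot\trianglelefteq y$ give $F^{n}(0^{n})\dot\trianglelefteq F^{n}(y)=y$ for every $n$, hence $x^{*}\dot\trianglelefteq y$ for any fixed point $y$. The main obstacle is the reverse inequality, and it is here that the full strength of ``well-behaved \emph{and} $\omega$-complete'' is needed: because a semiring lacks additive inverses, one cannot subtract $x^{*}$ from both sides of $y=My+b$. My plan is to exploit well-behavedness (Definition~\ref{def:good-semiring}) via the semantic characterisation just obtained: any fixed point $y$ unfolds as $y=M^{n}y+\sum_{k<n}M^{k}b$, so its components dominate each finite path-sum and hence $x^{*}$, but since $x^{*}=\rho(\Lbag\cdot,\tau^{*},C\Rbag)$ is already the $\sigma$-additive weight assigned to the full cone of finite paths via Lemma~\ref{lem:cpath-semiring}, a fixed point strictly greater than $x^{*}$ would, after tracing it back to the underlying cones, violate the monotonicity inequality $\rho(\Pi_{1})\sqsubseteq\rho(\Pi_{2})$ required of well-behaved semirings. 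Making the ``residual'' $M^{n}y$ vanish into $x^{*}$ in the limit under the natural preorder is the delicate step I expect to absorb most of the work in the actual proof.
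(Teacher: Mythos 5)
Your existence argument is exactly the paper's: Lemma~\ref{lem:cpo}, Lemma~\ref{lem:scott} and the Kleene fixed-point theorem give a least fixed point $x^*=\bigsqcup_{k}F^{k}(0^{n})$, and your identification of $x^*$ with $M^*b$ and with the path weights $\rho(\Lbag x,\hat\alpha,C\Rbag)$ is a useful elaboration that the paper defers to its ``Adequacy'' subsection. The divergence is in the uniqueness half. The paper disposes of it in one sentence --- ``because the linear equation system has the same number of equations and unknowns, this solution is unique'' --- whereas you correctly flag the reverse inequality $y\dot\trianglelefteq x^*$ as the real content and try to extract it from well-behavedness.

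That step is a genuine gap, and as sketched it cannot be closed. Well-behavedness (Definition~\ref{def:good-semiring}) only constrains the weights of sets of finite paths under inclusion; an arbitrary fixed point $y$ of $F$ lying strictly above $x^*$ is just a vector in $W^n$ and need not be $\rho(\Pi)$ for any set of paths $\Pi$, so there is nothing to ``trace back to the underlying cones'' and no monotonicity inequality for it to violate. Worse, the residual $M^{k}y$ genuinely need not vanish: over $\overline{\mbb R}^+_0$ (which is well-behaved and $\omega$-complete) a state $x\notin C$ carrying a single $\tau$ self-loop of weight $1$ and no path into $C$ produces, via \eqref{eq:cw-ls-only-tau}, the equation $x_\tau=x_\tau$, which every element of the carrier satisfies; the system then has infinitely many fixed points, and only the least one equals the intended weight $\rho(\Lbag x,\tau^*,C\Rbag)=0$. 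So your instinct that uniqueness is the delicate part is sound --- the paper's own appeal to the equation/unknown count does not establish it either --- but any correct treatment must either impose extra hypotheses excluding such divergent $\tau$-cycles or weaken ``unique solution'' to ``least solution'' (which is all that adequacy and the algorithm actually require); neither your sketch nor the paper's one-liner supplies the missing argument.
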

\begin{proof}
  By Lemma \ref{lem:cpo}, Lemma \ref{lem:scott} and Kleene Fix-point
  Theorem $F$ has a least fix point.  Because the linear equation
  system has the same number of equations and unknowns, this solution
  is unique.
\end{proof}

The linear equation systems defined by
the equation families \eqref{eq:cw-ls-in-class}, \eqref{eq:cw-ls-only-tau} and \eqref{eq:cw-ls-action} have exactly one solution and hence the algorithm
proposed is decidable. Moreover this holds also for any
behavioural equivalence whose saturation can be expressed in a similar
way \eg~delay bisimulation (\cf~Example~\ref{ex:delay-bisim}).

\subsubsection{Adequacy}

If $x\in C$, then the empty execution $\varepsilon$ is the 
only element of the set $\Lbag x,\tau^*,C \Rbag$,
(by definition of reachability) $\rho(\varepsilon)$ is the value
of the $0$-fold multiplication \ie~the unit $1$. 
This case falls under \eqref{eq:cw-ls-in-class} and 
hence $x_\tau$ is $\rho(x,\tau^*,C)$ when $x \in C$.

On the other hand, if $x\notin C$, then every path reaching $C$ from $x$ 
needs to have length strictly greater than $0$; reworded, it starts with a transition
${x\xrightarrow{\tau}y}$ and from $y$ heads towards $C$. 
The weight of $\Lbag x,\tau\tau^*,C \Rbag$ is the sum of the weights
of its paths which are themselves the ordered multiplication of their
steps. Then by grouping paths by their second state the remaining parts
are exactly the paths in the set $\Lbag y,\tau^*,C \Rbag$.
Then we obtain the unfolding
\[\rho(\Lbag x,\tau\tau^*,C \Rbag) = \sum_{y\in X}\rho(x\xrightarrow{\tau}y)\cdot\rho(\Lbag y,\tau^*,C \Rbag)\]
which recursively defines the weight of these sets as the unfolding of 
executions. In particular, the base case is precisely \eqref{eq:cw-ls-only-tau}
and the inductive one is \eqref{eq:cw-ls-in-class}.

Every path in the set $\Lbag x,\tau^*a\tau^*,C \Rbag$ contains exactly
one transition labelled by the action $a$ and hence it has a transition,
to some state $y$, and is labelled with either $a$ or $\tau$. 
In the first case , the observable $a$ is consumed and remaining path 
is necessarily in the set $\Lbag y,\tau^*,C \Rbag$ covered above.
In the second case, the only observable of the path has not been consumed yet
and thus the remaining part of the path should be in the set 
$\Lbag y,\tau^*a\tau^*,C \Rbag$
completing the case for \eqref{eq:cw-ls-action}.

\begin{proposition}
Let $\mfk W$ be a positively ordered $\omega$-semiring.
For any $C$, $\alpha$ and $x$, solutions for
\eqref{eq:cw-ls-in-class}, \eqref{eq:cw-ls-only-tau} and \eqref{eq:cw-ls-action}
are exactly the weights of $\Lbag x,\hat\alpha,C \Rbag$.
\end{proposition}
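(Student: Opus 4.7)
The plan is to observe that the adequacy discussion preceding the statement essentially does the substantive work: it shows, by a case analysis on the first transition of a finite path, that the assignment $x_\tau \mapsto \rho(\Lbag x,\tau^*,C\Rbag)$ and $x_a \mapsto \rho(\Lbag x,\tau^*a\tau^*,C\Rbag)$ is a fixed point of the operator defining the linear system \eqref{eq:cw-ls-in-class}--\eqref{eq:cw-ls-action}. To upgrade this to the stated identification I would then invoke Theorem \ref{thm:uniq-solution}: since $\mfk W$ is a positively ordered $\omega$-semiring, the system has a unique solution, and so the claimed values must coincide with it.

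In more detail, I would first handle the two $\tau$-cases. For $x\in C$ the only element of $\Lbag x,\tau^*,C\Rbag$ is the empty path $\varepsilon$, whose weight is the empty product $1$, which matches \eqref{eq:cw-ls-in-class}. For $x\notin C$ every path in $\Lbag x,\tau^*,C\Rbag$ has length at least one and factors uniquely as an initial transition $x\xrightarrow{\tau} y$ followed by a path in $\Lbag y,\tau^*,C\Rbag$; combining semiring distributivity with the multiplicativity of $\rho$ along paths turns the sum defining $\rho(\Lbag x,\tau^*,C\Rbag)$ into the right-hand side of \eqref{eq:cw-ls-only-tau}. For \eqref{eq:cw-ls-action} I would partition $\Lbag x,\tau^*a\tau^*,C\Rbag$ according to whether the initial label is $a$ or $\tau$: in the $a$-initial case the suffix lies in $\Lbag y,\tau^*,C\Rbag$, while in the $\tau$-initial case it lies in $\Lbag y,\tau^*a\tau^*,C\Rbag$, giving the two summands on the right of \eqref{eq:cw-ls-action}.

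The main obstacle, and the step requiring the most care, is verifying that the prefix-minimality clause in the definition of $\Lbag\cdot,\cdot,\cdot\Rbag$ behaves correctly under the first-transition decomposition, so that the decompositions above are indeed bijections. Concretely, one must check that a strict prefix of a path whose trace is in $\tau^*a\tau^*$ either corresponds, in a trace-preserving way, to a strict prefix of the suffix, or reduces to the singleton transition $x\xrightarrow{a} y$; in the latter case $y\notin C$ follows exactly when the suffix is non-empty, which is the only scenario in which $y\notin C$ is needed. Once these bookkeeping checks are discharged the proposed family is a solution of the linear system, and Theorem \ref{thm:uniq-solution} then forces it to be \emph{the} solution, completing the argument.
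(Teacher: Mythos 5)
Your proposal is correct and follows essentially the same route as the paper: the paper's ``Adequacy'' discussion performs exactly this first-transition decomposition of $\Lbag x,\tau^*,C\Rbag$ and $\Lbag x,\tau^*a\tau^*,C\Rbag$ to show the intended weights satisfy the three equation families, and then appeals to the uniqueness of the solution from Theorem~\ref{thm:uniq-solution}. Your extra attention to verifying that the prefix-minimality condition is preserved under the decomposition is a point the paper glosses over, but it does not change the structure of the argument.
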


\subsection{The algorithm and its complexity}
\label{sec:cw-algo-and-complexity}

\begin{figure}[t]
\begin{algorithmic}[1]
\STATE $\mcl{X} \leftarrow \{X\}$
\STATE $\mcl{X}'\leftarrow \emptyset$
\REPEAT
  \STATE $changed \leftarrow\FALSE$
  \STATE $\mcl{X}''\leftarrow \mcl{X}$
  \FORALL{$C \in \mcl{X}\setminus\mcl{X}'$}
    \FORALL{$\alpha \in A+\{\tau\}$}
      \IF{$\langle\alpha,C\rangle$ is a split}
        \vspace{3pt}
        \STATE $\mcl{X} \leftarrow \bigcup\{
          B/{\underset{\scriptscriptstyle{\alpha,C}}{\approx}}\mid 
          B \in \mcl X\}$
        \STATE $changed \leftarrow\TRUE$
      \ENDIF
    \ENDFOR
  \ENDFOR
  \STATE $\mcl{X}'\leftarrow \mcl{X}''$
\UNTIL{\NOT$changed$}
\RETURN \mcl X
\end{algorithmic}
\hrule{}
\caption{The algorithm for weak \mfk W-bisimulation.}
\label{fig:cw-algo}
\end{figure}

In this section we describe the algorithm and study its worst
case complexity. The algorithm and the resulting analysis follow the structure
of the Kanellakis-Smolka's result. However, some assumptions available
in the case of strong bisimulation for non-deterministic systems 
are not available in this settings. For instance, transitions have to be computed
on the fly. Moreover, like many other algorithms parametrized over
semirings, no hypotheses are made over the numerical domain nor
over its machine representation. As a consequence, we can not assume
constant-time random access data-structures or linearly order the
elements of the semiring. However, since many practical semirings
admit total-orderings and efficient data structures, we will describe
also this second case providing a more efficient version of the 
algorithm for the general case.

The first algorithm we propose is reported in Figure~\ref{fig:cw-algo}.
Given a finite \mfk W-LTS $(X,A+\{\tau\},\rho)$ as input, it returns a
partition \mcl{X} of $X$ inducing a weak \mfk W-bisimulation for the
system.

The partition \mcl X is initially assumed to have the set of states $X$ 
as its only block and corresponds
to the assumption of the largest possible equivalence relation on $X$ being also a
weak bisimulation. In general, any partition coarser than some weak bisimulation 
would be a suitable initial partition.

The purpose of the two auxiliary partitions $\mcl X'$ and $\mcl X''$ 
is to keep track of which classes were added to \mcl X during the
previous iteration of the repeat-until loop and thus avoiding to reuse a split
candidate. We used these additional partitions for readability but the same result
may be achieved, for instance, having two colours distinguishing blocks already
checked. Moreover, $\mcl X'$ and $\mcl X''$ make the flag $changed$ redundant.

The algorithms iterates over each split candidate
$\langle \alpha,C\rangle$ and tries to split the partition by checking
whatever \eqref{eq:cw-split-check} holds. If the partition ``survives''
to every split test then it is stable and in particular it describes
a weak $\mfk W$-bisimulation relation. The saturated transitions
required to test $\langle \alpha,C\rangle$ are computed
by solving the linear equations system described before.
Overall, we have to solve ${|A|+1}$ systems of $|X|$ linear equations and unknowns
for each $C$. 

The complexity of solving these systems depends on the underling
semiring structure. For instance, solving a system over the semiring 
of non-negative real numbers is in \mrm P \cite{aho74:algobook},
whereas solving a system over the tropical (resp. arctic) semiring 
is in $\mrm{NP} \cap \mrm{coNP}$ (\cf~\cite{gp12:corr}). Since
the algorithm is parametrized by the semiring, its complexity will 
be parametrized by the one introduced by the solution of these
linear equation systems. Therefore we shall denote by
$\mcl{L}_\mfk  W(n)$ the complexity of solving a system of $n$ linear
equations in $n$ variables over \mfk W.

\begin{remark}
\label{rem:cw-efficient-split}
The complexity of the split test can be made preciser since we are not
solving a general linear system, but a specific sub-class of these.
For instance, solving a linear system over the boolean semiring is
NP-Complete in general, whereas we are interested in a specific
subclass of those encoding a reachability problem over a directed
graph which is in P.
\end{remark}

Let $n$ and $m$ denote the cardinality of states and
labels respectively.  For each block $C$ used to generate splits,
there are exactly $m$ candidates requiring to solve $m$ split tests
and perform at most $m$ updates to \mcl X.  Splits can be thought
describing a tree whose nodes are the various blocks encountered by
the algorithm during its execution and whose leaves are exactly the
elements of the final partition. Because the cardinality of \mcl X is
bound by $n$, the algorithm can encounter at most $\mcl O(n)$ blocks
during its entire execution and hence it performs at most $\mcl O(n)$
updates of \mcl X (which happens when splits describe a perfect tree with
$n$ leaves).  Therefore, in the worst case, the algorithm does
$\mcl{O}(nm)$ split tests and $\mcl{O}(n)$ partition refinements.
Partition refinements and checks of \eqref{eq:cw-split-check} can be
both done in $\mcl{O}(n^2)$ without any additional assumption about
$X$, $A$ and \mfk W nor the use of particular data structures or
primitives.  Therefore the asymptotic upper bound for time complexity
of the proposed algorithm is $\mcl{O}(nm(\mcl{L}_\mfk  W(n) + n^2))$
where $\mcl{L}_\mfk  W(n)$ is the upper bound for the complexity
introduced by computing the weak transitions for a given set of
states.

\begin{figure}[t]
\begin{algorithmic}[1]
\STATE $\mcl{X} \leftarrow \{X\}$
\STATE $\mcl{X}'\leftarrow \emptyset$
\REPEAT
  \STATE $changed \leftarrow\FALSE$
  \FORALL{$C \in \mcl{X}\setminus\mcl{X}'$}
    \FORALL{$\alpha \in A+\{\tau\}$}
		\STATE compute and sort $\rho(x,\hat\alpha,C)$ by block and weight
    \ENDFOR
    \IF{there is any split}
      \STATE $\mcl{X}'\leftarrow \mcl{X}$
      \STATE $\mcl{X} \leftarrow refine(\mcl X,C)$
      \STATE $changed \leftarrow\TRUE$
    \ENDIF
  \ENDFOR
\UNTIL{\NOT$changed$}
\RETURN \mcl X
\end{algorithmic}
\hrule{}
\caption{An alternative algorithm for linearly ordered blocks and weights.}
\label{fig:cw-algo-with-ords}
\end{figure}

The time complexity can be lowered by means of more efficient representations of
systems, partitions and weights. For instance, the structure of every semiring can
be used to define an ordering for its elements (\cf\cite{han2012:ordsemiring})
allowing the use of lookup data structures. 
Under the assumption of some linear ordering for weights and blocks 
(at least within the same partition) 
the operations of refinement and split testing can be carried out
more efficiently by sorting lexicographically the transitions ending in the
splitting block $C$. The resulting algorithm is reported in 
Figure~\ref{fig:cw-algo-with-ords}. 

This allows the algorithm to carry out the refinement of \mcl X
while it is reading the lexicographically ordered list of the saturated 
transitions. In fact, a block $B$ is split by $\langle\alpha,C\rangle$ if the list
contains different weights in the portion of the list where $B$ appears.
A change in the weights correspond to two states $x$ and $y$ such that 
\eqref{eq:cw-split-check} holds. 
For each $\langle\alpha,C\rangle$ there are at most $n$ weak transitions 
${\rho(x,\hat\alpha,C)}$ and these are sorted in
$\mcl O(n\mrm{ln}(n))$ -- or in $\mcl O(n)$ using a classical algorithm from 
\cite{aho74:algobook}. On the worst case the algorithm encounters $\mcl O(n)$ 
blocks during its entire execution yielding a worst case time complexity in 
$\mcl{O}(nm(\mcl{L}_\mfk  W(n) + n))$.

Overall, we have proved the following result:
\begin{proposition}
The asymptotic upper bound for time complexity of the algorithm is 
in $\mcl{O}(nm(\mcl{L}_\mfk  W(n) + n^2))$, for the general case, and in
$\mcl{O}(nm(\mcl{L}_\mfk  W(n) + n))$ given a linear ordering for blocks and weights.
Both algorithms have space complexity in $\mcl O(mn^2)$.
\end{proposition}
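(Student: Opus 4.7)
The plan is to bound the three quantities that drive the analysis: the number of splitter candidates processed throughout the whole run, the cost paid on each candidate, and the peak memory footprint. All three follow from structural facts already in the paper, so the proof will be an accounting argument rather than a new algorithmic idea.

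First I would argue that the total number of splitter candidates $\langle\alpha,C\rangle$ inspected over the entire execution is $\mcl{O}(nm)$. Since $\mcl{X}$ is always a partition of a set of cardinality $n$, we have $|\mcl{X}|\leq n$ at every step, and the auxiliary partition $\mcl{X}'$ guarantees that each block $C$ is used to generate splits at most once (once treated, it belongs to $\mcl{X}'\cap\mcl{X}$ on the next iteration and is skipped). Because refinement only breaks existing blocks into smaller pieces and each refinement appends at least one new block, the number of distinct blocks that ever appear is $\mcl{O}(n)$. Combining this with the $m+1$ choices of label for each $C$ gives $\mcl{O}(nm)$ candidates, hence $\mcl{O}(nm)$ calls to the linear-system solver and $\mcl{O}(n)$ actual refinements of $\mcl{X}$ (one per successful split, bounded by the total number of new blocks).

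Next I would bound the per-candidate cost. For each $\langle\alpha,C\rangle$, computing the vector $(\rho(x,\hat{\alpha},C))_{x\in X}$ reduces to solving one linear system in $|X|=n$ unknowns, which costs $\mcl{L}_{\mfk{W}}(n)$ by definition; by the decomposition into a $\tau$-subsystem and a per-action subsystem noted after \eqref{eq:cw-ls-action}, no extra factor of $m$ is incurred at this point. Once these values are in hand, testing \eqref{eq:cw-split-check} for every block $B$ and refining $\mcl{X}$ by the induced equivalence $\underset{\alpha,C}{\approx}$ takes $\mcl{O}(n^2)$ in the general case, since one compares every pair of same-block states without assuming any order on the semiring. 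With a linear order on weights and blocks, lexicographic sorting of the $n$ values costs $\mcl{O}(n\log n)$ (or $\mcl{O}(n)$ using radix-style sorting as in \cite{aho74:algobook}), after which a single pass over the sorted list suffices to detect and perform the refinement, giving $\mcl{O}(n)$. Multiplying the per-candidate cost by the $\mcl{O}(nm)$ bound yields the stated overall bounds $\mcl{O}(nm(\mcl{L}_{\mfk{W}}(n)+n^2))$ and $\mcl{O}(nm(\mcl{L}_{\mfk{W}}(n)+n))$, respectively.

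For the space complexity I would observe that the input weight function $\rho:X\times(A+\{\tau\})\times X\to W$ is the dominant term: representing it naively requires $\mcl{O}(mn^2)$ semiring cells. Everything else is subdominant: the partitions $\mcl{X},\mcl{X}',\mcl{X}''$ occupy $\mcl{O}(n)$; the coefficient matrix $M$ and vector $b$ of \eqref{eq:sys-matrix} for the current splitter fit in $\mcl{O}(n^2)$; the weight vector $(\rho(x,\hat{\alpha},C))_x$ needed to evaluate splits fits in $\mcl{O}(n)$; and each of these can be reused from one splitter to the next. Therefore the total space is $\mcl{O}(mn^2)$.

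The only delicate step is the first one, namely making rigorous that, despite the outer repeat-until loop potentially revisiting blocks that were created by earlier refinements, the amortised number of splitter candidates remains linear in $n$ (times $m$) rather than quadratic. The key observation, which I would state explicitly, is that the set $\mcl{X}\setminus\mcl{X}'$ at the top of each iteration is exactly the set of blocks created by the previous round, so each newly generated block contributes $m+1$ candidate splitters exactly once across the whole execution; this is what ties the total candidate count to the linear bound on the number of blocks ever produced.
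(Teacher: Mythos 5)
Your proof is correct and follows essentially the same route as the paper: the tree-of-blocks argument bounding the blocks ever encountered by $\mcl O(n)$ (hence $\mcl O(nm)$ splitter candidates and $\mcl O(n)$ refinements), a per-candidate cost of $\mcl L_{\mfk W}(n)$ for the linear system plus $\mcl O(n^2)$ for the unordered split test or $\mcl O(n)$ after sorting in the ordered case, and space dominated by the $\mcl O(mn^2)$ representation of $\rho$. Your explicit amortisation via $\mcl X\setminus\mcl X'$ and the itemised space accounting merely make precise what the paper leaves implicit.
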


\section{Coalgebraic perspective}\label{sec:cat-view}
In this Section we illustrate the categorical construction behind
Definition~\ref{def:wlts-weak}.  The presentation is succinct due to
space constraints but it is based on general results from coalgebraic
theory.  In particular, we define weak bisimulations as cocongruences
of saturated or weak systems extending the elegant approach proposed
by Silva and Westerbaan in \cite{sw2013:epsilon}.

This is not the first work on a coalgebraic perspective of weak
bisimulations coalgebraically, as in the recent years there have been
several works in this direction.  In general, the approach is to
recover weak bisimulation as the coalgebraic bisimulation of saturated
systems.  In \cite{sokolova05:entcs,sokolova09:sacs} Sokolova et
al.~studied the case of action-based coalgebras and demonstrated their
results on the cases of non-deterministic and fully-probabilistic
systems.  In particular, the latter required to change the category of
coalgebras.  Recently, Brengos
\cite{brengos2012:ifip,brengos2013:corr} proposed an interesting
construction based on ordered-functors which yields saturated
coalgebras for the same behavioural functor.  Both these constructions
are parametric in the notion of saturation and are therefore way more
general; \cite{brengos2013:corr} describes an algebraic
structure and some conditions yielding precisely saturations for weak
bisimulations.  However, this approach does not cover the case of
generative and stochastic systems \cite[Sec.~6]{brengos2012:ifip} yet.
In \cite{ps2012:cmcs} Soboci\'{n}ski describes a neat account of weak
(bi)simula\-tion for non-deterministic systems and proves that
saturation via the double-arrow construction (\ie~$\tau$-closure)
results from a suitable change of base functor having a left adjunct
in the 2-categorical sense.

Likewise, we rely on saturation of the given systems but we do not
require any additional parameter.  Moreover, we base our definition on
cocongruences which allow us to work explicitly with the equivalence
classes and saturate the given coalgebras such that these describes
how each class is reached by each state without the need to alter the
behavioural functor.

Our saturation construction builds on the account of
$\epsilon$-transitions recently given in \cite{sw2013:epsilon} and on
the neat coalgebraic perspective of trace equivalence given by Hasuo
in \cite{hasuo2010:trace}.  Therefore the same settings are assumed,
\ie~we consider coalgebras for functors like $TF$ where $T$ and $F$
are endofunctors over a category \cat C with all finite limits and
$\omega$-colimits; $(T,\mu,\eta)$ is a monad; there exists a natural
transformation $\lambda$ distributing $F$ over $T$; the Kleisli
category $\cat{Kl}(T)$ is CPPO-enriched and has, for any $X \in \cat
C$, a final $(\hole + X)$-coalgebra.  Before describing the saturation
construction let us state the main definition of this Section.
\begin{definition}\label{def:coalg-weak}
Given two $TF_\tau$-coalgebras $(X,\alpha)$ and $(Y,\beta)$, a span of
jointly monic arrows $X \xleftarrow{p} R \xrightarrow{q} Y$ describes
a weak bisimulation between $\alpha$ and $\beta$ if and only if
there exists an epic cospan $X \xrightarrow{ f} C \xleftarrow{ g} Y$
such that $(R,p,q)$ is the final span to make the following diagram
commute:
\[
\begin{tikzpicture}[auto,xscale=2.3,yscale=1.2,font=\footnotesize,
	extended/.style={shorten >=-#1, shorten <=-#1},
	baseline=(current bounding box.center)]
		\node (n0) at (0,1) {\(X\)};
		\node (n1) at (2,1) {\(Y\)};
		\node (n2) at (1,.5) {\(C\)};
		\node (n3) at (0,0) {\(TF_\tau X\)};
		\node (n4) at (2,0) {\(TF_\tau Y\)};
		\node (n5) at (1,-.5) {\(TF_\tau C\)};
		\node (n6) at (-.7,1) {\(X\)};
		\node (n7) at (2.7,1) {\(Y\)};
		\node (n8) at (-.7,0) {\(TF_\tau X\)};
		\node (n9) at (2.7,0) {\(TF_\tau Y\)};
		\node (n10) at (1,1.5) {\(R\)};
		\draw[->] (n0) to node [swap] {\( f\)} (n2);
		\draw[->] (n1) to node [] {\( g\)} (n2);
		\draw[->] (n0) to node [swap] {\(\alpha^w\)} (n3);
		\draw[->] (n1) to node [] {\(\beta^w\)} (n4);
		\draw[->] (n2) to node [swap] {\(\gamma\)} (n5);
		\draw[->] (n3) to node [swap] {\(TF_\tau f\)} (n5);
		\draw[->] (n4) to node [] {\(TF_\tau g\)} (n5);
		\draw[->] (n6) to node [swap] {\(\alpha\)} (n8);
		\draw[->] (n7) to node [] {\(\beta\)} (n9);
		\draw[->] (n10) to node [swap] {\(p\)} (n0);
		\draw[->] (n10) to node [] {\(q\)} (n1);
\end{tikzpicture}
\]
where $\alpha^w$ and $\beta^w$ are the \emph{weak saturated} $TF_\tau$-coalgebras 
\wrt~$ f$ and $g$. 
\end{definition}

Let us see how the weak saturation $\alpha^w$ is defined.  In our
setting, the traces of a $TF$-coalgebra $\alpha$ are described by the
final map $\mrm{tr}_\alpha$ from the lifting of $\alpha$ in
$\cat{Kl}(T)$ to the final $\overline F$-coalgebra where $\overline F$
is the lifting of $F$ to $\cat{Kl}(T)$ induced by the distributive law
$\lambda : FT \Longrightarrow TF$ (\cf~\cite{hasuo2010:trace}).  Rawly
speaking, the monad $T$ can be thought as describing the branching of
the system whereas the observables are characterized by $F$. Assuming
this point of view, any $F$ can be extended with silent actions $\tau$
as the free pointed functor
\[
   F_\tau \defeq X + FX\text.
\]
Now, a $TF_\tau$-coalgebra $\alpha$ can be ``determinized'' by means
of its \emph{iterate} \cite{sw2013:epsilon}:
\[\mrm{itr}_\alpha \defeq \nabla_{FX}\circ\mrm{tr}_\alpha\]
where $\nabla$ is the codiagonal; the traces refer to $\alpha$ seen as
a $T(X + B)$ for $B = FX$ and $(X,\mrm{itr}_\alpha)$ is a
$TF$-coalgebra.  The iterate offers an elegant and general way to
``compress'' executions with leading silent transitions like $\tau^*a$
into single-step transitions with exactly one observable but retaining
the effects of the entire execution within the monad $T$.

These results can be used to cover executions ending with an observable
and hence do not directly lend themselves to equivalences based also on
trailing silent actions like in the case $\tau^*a\tau^*$, as required by the
weak bisimulation. However, let us suppose to have, for any given $TF_\tau$-coalgebra
$(X,\alpha)$, the $T$-coalgebra $(X,\alpha^\tau)$ describing how each state
reaches every class with $\tau$-transition only; then, the coalgebra 
describing reachability by $\tau^*a\tau^*$ is exactly:
\[
\alpha^\flat :
X \xrightarrow{\mrm{itr}_\alpha} TFX \xrightarrow{TF{\alpha^\tau}}
TFTX \xrightarrow{\lambda_X} TTFX \xrightarrow{\mu} TFX
\text.
\]
Then the saturated coalgebra $\alpha^w$ is defined by means of the
2-cell structure of $\cat{Kl}(T)$ as the join described by the diagram below.
\begin{equation}
\label{eq:coalg-split}
\begin{tikzpicture}[auto,xscale=2.5,yscale=1.2,font=\footnotesize,rotate=-45,
	baseline=(current bounding box.center)]
		\node (n0) at (0,1) {\(X\)};
		\node (n1) at (0,0) {\(X\)};
		\node (n2) at (1,1) {\(FX\)};
		\node (n3) at (1,0) {\((X+F X)\)};
		\draw[->] (n0) to node [swap] {\(\alpha^\tau\)} (n1);
		\draw[->] (n0) to node [] {\(\alpha^\flat\)} (n2);
		\draw[->] (n1) to node [swap] {\(\iota_1\)} (n3);
		\draw[->] (n2) to node [] {\(\iota_2\)} (n3);
		\draw[->] (n0) -- (n3);
		\node[preaction={fill=white,inner sep = 0pt}] at ($(n1)!.5!(n2)$) {\(\alpha^w\defeq\sqcup\)};
		\node[] at ($(n1)!.25!(n2)$) {\(\sqsubseteq\)};
		\node[] at ($(n2)!.25!(n1)$) {\(\sqsupseteq\)};
\end{tikzpicture}\end{equation}
This definition points out that $\tau^*$ and
$\tau^*a\tau^*$ are two close but different cases.

In order to define the $T$-coalgebra $\alpha^\tau$, first we
need to be able to consider only the silent action of the given
$\alpha$. This information can be isolated from $\alpha$
by means of the same structure used in \eqref{eq:coalg-split}.
Therefore we define, for every $TF_\tau$-coalgebra $\alpha$,
its silent and observable parts, namely $\alpha^s$ and
$\alpha^o$, as the (greatest) arrows to make the following 
diagram commute and have $\alpha$ as their join.
\[
\begin{tikzpicture}[auto,xscale=2.5,yscale=1.2,font=\footnotesize,rotate=-45,
	baseline=(current bounding box.center)]
		\node (n0) at (0,1) {\(X\)};
		\node (n1) at (0,0) {\(X\)};
		\node (n2) at (1,1) {\(FX\)};
		\node (n3) at (1,0) {\((X+F X)\)};
		\draw[->] (n0) to node [swap] {\(\alpha^s\)} (n1);
		\draw[->] (n0) to node [] {\(\alpha^o\)} (n2);
		\draw[->] (n1) to node [swap] {\(\iota_1\)} (n3);
		\draw[->] (n2) to node [] {\(\iota_2\)} (n3);
		\draw[->] (n0) -- (n3);
		\node[preaction={fill=white,inner sep = 0pt}] at ($(n1)!.5!(n2)$) {\(\alpha=\sqcup\)};
		\node[] at ($(n1)!.25!(n2)$) {\(\sqsubseteq\)};
		\node[] at ($(n2)!.25!(n1)$) {\(\sqsupseteq\)};
\end{tikzpicture}\]

Because $\alpha^\tau$ has to describe how each class is reached,
classes can be used as the observables needed to apply the iterate
construction to $\alpha^\tau$. However, to be able to select the class
to be reached and consider it as the only one observable by the
iterate (likewise $F_\tau$ distinguish silent and observable actions
by means of a coproduct) we need $X$ and $C$ to be represented as
indexed coproducts of simpler canonical subobjects corresponding to
the classes induced by $f : X \to C$.
Henceforth, for simplicity we assume $X \cong X\cdot 1$ and $C \cong
C\cdot 1$. %
For each class $c : 1 \to C$ let $X\cong \overline{X}_c + X_c$ be the
split induced by $c$.  This extends to the coalgebra $\alpha^s$ (by
coproduct) determining the coalgebra: $\alpha^s_c : \overline{X}_c \to
T(\overline{X}_c + X_c)$ whose iterate is the map $\alpha^+_c :
\overline{X}_c \to T(X_c)$ describing executions of silent action only
ending in $c$ (but starting elsewhere). This yields a $C$-indexed
family of morphisms which together describe $\tau^+$ and the
information is collected in one $T$-algebra as a join in the 2-cell
like \eqref{eq:coalg-split}.  For this join to be admissible we
require $T$ to not exceed the completeness of 2-cells, \ie~for any $x
: 1 \to X \in \cat C$ the supremum of the set $\cat{Kl}(T)(1,X)$
determined by $\cat{Kl}(T)(x,X)$ exists. Reworded if cells are
$\kappa$-CPPOs, then $T$ is $\kappa$-finitary; \eg~in \cat{Set}
$T$-coalgebras describe image $\kappa$-bounded $T$-branching
systems. Thus, for every $x$ the family of arrows $\{\alpha^+_c\circ
x\}$ is limited by $\kappa$ and can be joined. These are 
composed in $\alpha^+$ as the universal arrow in the $X$-fold
coproduct.  The last step is provided by the monad
unit which is a $T$-coalgebra describing how states reach their
containing class and can be easily joined to the above obtaining,
finally, $\alpha^\tau$. This completes the construction of $\alpha^w$.

\paragraph{Weighted labelled transition systems}
Assuming at most countably many actions, image-countable $\mfk W$-WLTs 
are in 1-1 correspondence with $\fw(A\times\hole)$-coalgebras where
$(\fw,\mu,\eta)$ is the monad of $\mfk W$-valued functions with at most
countable support. In particular, $\fw X$ is the set morphisms from
$\cat{Set}(X,W)$ factoring through $\mbb N$. On arrows $\fw$ is defined as 
$\fw f (\varphi)(y)\defeq \sum_{x\in f^{-1}(y)}\varphi(x)$.  The unit $\eta$
is defined as $\eta(x)(y) \defeq 1$ for $x = y$ and $0$ otherwise, and the 
multiplication $\mu$ is defined as 
$\mu(\psi)(x) \defeq \sum_{\varphi}\psi(\varphi)\cdot\varphi(x)$.

If $\mfk W$ is the boolean semiring, $\fw$ is precisely the countable
powerset monad.  Strength and double strength readily generalize to
every $\fw$ and by \cite{hasuo07:trace} there is a canonical law
$\lambda$ distributing $(A\times\hole)$ over $\fw$.  The semiring
$\mfk W$ can be easily endowed with an ordering which lifts point-wise
to $\mfk W$-valued functions \cite{han2012:ordsemiring}. In
particular, any $\omega$-semiring with a natural order 
(\cf~Section~\ref{sec:good-semirings}) yields a 
CPPO-enriched $\cat{Kl}(\fw)$ with bottom the constantly $0$ function.

\begin{theorem}
Let $T$ be $\fw$ and $F$ and $(A\times\hole)$. For any given 
$TF_\tau$-coalgebra $\alpha$ and its corresponding $\mfk W$-LTS,
Definition~\ref{def:coalg-weak} and Definition~\ref{def:wlts-weak}
coincide.
\end{theorem}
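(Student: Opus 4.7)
The plan is to unpack both sides of the claimed equivalence on a common carrier and show they impose the same condition on an equivalence relation $R \subseteq X\times X$ (for simplicity I would treat the case $X = Y$ and $\alpha = \beta$, the general case being a straightforward relativisation along the cospan $f,g$).

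First, I would verify the translation between data: an image-countable $\mfk W$-LTS $(X,A+\{\tau\},\rho)$ is the same as a $\fw (A+\{\tau\}) \times \hole$-coalgebra via currying, and the splitting $(X+F X)$ from \eqref{eq:coalg-split} sends a state $x$ to its silent part $\alpha^s(x)(y) = \rho(x,\tau,y)$ and its observable part $\alpha^o(x)(a,y) = \rho(x,a,y)$ for $a\in A$. Then I would show, by chasing the definition of the iterate $\mrm{itr}_{\alpha^s_c}$ in $\cat{Kl}(\fw)$, that $\alpha^+_c(x)(y)$ coincides with $\rho(\Lbag x,\tau^+,C\Rbag)(y)$ whenever $y\in C$ and is $0$ otherwise: this is exactly the fact that in $\cat{Kl}(\fw)$ the final $(\hole+B)$-coalgebra maps a state to the $\mfk W$-valued measure obtained by summing weights over all paths that first enter $B$, which matches the linear system \eqref{eq:cw-ls-in-class}–\eqref{eq:cw-ls-only-tau} by the adequacy argument already given in the paper. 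Adding the monad unit $\eta$ (which accounts for the reflexive case $x\in C$) then gives $\alpha^\tau(x)(C) = \rho(x,\tau^*,C)$.

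Second, I would do the analogous computation for $\alpha^\flat$. Unfolding its defining composite, the Kleisli composition $\mu\circ\lambda_X\circ TF\alpha^\tau\circ\mrm{itr}_\alpha$ at a state $x$ sums, over all actions $a$ and intermediate states $y$, the product $\rho(\Lbag x,\tau^* a,y\Rbag)\cdot \rho(\Lbag y,\tau^*,C\Rbag)$, which by distributivity (the ``equivalent trees'' pictured in \eqref{eq:path-dist}) and the minimal-support argument of Lemma~\ref{lem:fpaths-support} equals $\rho(\Lbag x,\tau^*a\tau^*,C\Rbag)$. Joining this with $\alpha^\tau$ via \eqref{eq:coalg-split} in $\cat{Kl}(\fw)$ then packages, for every $x$ and every class $C$ of the cocongruence, exactly the pair of numbers $(\rho(x,\tau^*,C),\ \rho(x,\tau^*a\tau^*,C))_{a\in A}$.

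Third, having $\alpha^w$ identified, I would translate Definition~\ref{def:coalg-weak} itself. An epic cospan $X\xrightarrow{f}C\xleftarrow{f}X$ in $\cat{Set}$ is the quotient by an equivalence relation $R$, and commutativity of the outer square $\fw F_\tau f\circ \alpha^w = \gamma\circ f$ on both legs says that $\alpha^w$ pushes equal classes to equal $\fw F_\tau C$-values, i.e.\ that $x\,R\,x'$ implies the pair of weighted reach measures into every class of $R$ agree, which is literally the condition of Definition~\ref{def:wlts-weak}. Conversely, given such an $R$, I would take $C = X/R$ and $\gamma$ as the induced map, and then construct the final span $(R,p,q)$ as the pullback of $f$ along itself (jointly monic by construction).

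The main obstacle is the identification of the iterate with the path-sum $\rho(\Lbag x,\tau^+,C\Rbag)$: it requires matching the coalgebraic fix-point in $\cat{Kl}(\fw)$ with the least fix-point of the linear operator $F$ from \eqref{eq:sys-matrix}. This amounts to verifying that the CPPO-enrichment of $\cat{Kl}(\fw)$ induces on $\fw X \cong (W^X,\dot\trianglelefteq,0)$ the very order used in Lemma~\ref{lem:cpo}, so that Kleene iteration in the Kleisli category and Kleene iteration for the saturation system produce the same least solution; the uniqueness provided by Theorem~\ref{thm:uniq-solution} then pins them together. Once this is in place, the rest is bookkeeping between epic cospans and equivalence relations in \cat{Set}.
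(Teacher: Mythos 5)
Your proposal takes essentially the same route as the paper's own proof, which consists only of the one-line sketch ``by unfolding of Definition~\ref{def:coalg-weak} and by minimality of executions considered by the construction of $\alpha^w$'': your identification of $\alpha^\tau$ and $\alpha^\flat$ with the path-weights $\rho(\Lbag x,\tau^*,C\Rbag)$ and $\rho(\Lbag x,\tau^*a\tau^*,C\Rbag)$, and of epic cospans in \cat{Set} with quotients by equivalence relations, is precisely the unfolding the paper leaves implicit. The step you single out as the main obstacle --- reconciling the Kleisli iterate with the least fix-point of the linear system \eqref{eq:sys-matrix} through the CPPO-enrichment of $\cat{Kl}(\fw)$ --- is likewise glossed over by the paper, so your account is, if anything, more complete than the original.
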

\begin{proof}
By unfolding of Definition~\ref{def:coalg-weak} and by minimality of
executions considered by the construction of $\alpha^w$.
\end{proof}

\section{Conclusions and future work}
\label{sec:concl}

In this paper we have introduced a general notion of \emph{weak
  weighted bisimulation} which applies to any system that can be
specified as a LTS weighted over a \emph{semiring}.  The semiring
structure allows us to compositionally extend weights to multi-step
transitions.  
We have shown that our notion of weak
bisimulations naturally covers the cases of non-deterministic, 
fully-probabilistic, and stochastic systems, among others.  
We described a ``universal'' algorithm for computing weak bisimulations
parametric in the underlying semiring structure and proved its
decidability for every positively ordered $\omega$-semiring.
Finally, we gave a categorical account of the coalgebraic construction behind
these results, providing the basis for extending the results presented
here to other behavioural equivalences. 

Our results came with a great flexibility offered, from one hand,
by the possibility to instantiate WLTSs to several systems (by 
just providing opportune semirings) and, from the other,
by the possibility to consider many other behavioural equivalences
simply by changing the observation patterns used 
in Definition~\ref{def:wlts-weak} and in the linear 
equations systems at the core of the proposed algorithm
as described by Example~\ref{ex:delay-bisim}.

A possible future work is to improve the efficiency of our algorithm,
\eg~by extending Paige-Tarjan's algorithm for strong bisimulation
instead of Kannellakis-Smolka's, or using more recent approaches
based on symbolic bisimulations \cite{wimmer2006:sigref}.

The algorithm presented is based on
Kanellakis-Smolka's. A possible future work could be to improve the
efficiency of this algorithm, \eg~by extending Paige-Tarjan's
algorithm for strong bisimulation, or more recent approaches like
symbolic bisimulations (\eg~\cite{wimmer2006:sigref}).

Obviously, for specific systems and semirings there are solutions
more efficient than our.  For instance, in the case of systems over the semiring of
non-negative real numbers (which captures \eg~probabilistic
systems) the asymptotic upper bound for time complexity of our
algorithm is $\mcl O(mn^{3.8})$ (since $\mcl{L}_{\mbb R^+_0}(n)$ is in
$\mcl O(n^{2.8})$ using \cite{aho74:algobook}).  However, deciding
weak bisimulation for fully-probabilistic systems is in $\mcl O(mn^3)$
on the worst case using the algorithm proposed by Baier and Hermanns
in \cite{baier00} (the original analysis assumed $A$ to be fixed
resulting in the worst case complexity $\mcl O(n^3)$).  Their
algorithm capitalise on properties not available at the general level of
WLTSs (even under the assumption of suitable orderings), such as:
sums of outgoing transitions are bounded, 
there are complementary events, real numbers have more structure
than a semiring, weak and delay bisimulations coincide for finite 
fully-probabilistic systems (\eg~this does not hold for non-deterministic LTSs). The aim of
future work is to generalize the efficient results of
\cite{baier00} to a parametrized algorithm for constrained WLTSs, or
at least for some classes of WLTSs subject to suitable families of
constraints. 

The construction presented in Section~\ref{sec:cat-view}
introduces some techniques and tools that can be used to deal with
other behavioural equivalences.  In fact, we think that many
behavioural equivalences can be obtained by ``assembling'' smaller
components, by means of 2-splits, 2-merges and \emph{iterate}, as we
did for weak bisimulation.  We plan to provide a formal, and easy to
use, language for describing and combining these ``building blocks''
in a modular way.

An important direction for future work is to generalize our
framework by weakening the assumptions on the underlying category
(introduced in order to observe and manipulate equivalence classes)
and by considering different behavioural functors.  In particular, we
intend to extend this framework to \textsc{ULTraS}s, \ie~the generalization of
WLTSs recently proposed by Bernardo et al.~in
\cite{denicola13:ultras}. These are an example of \emph{staged transition systems}, where
several behavioural functors (or stages) are ``stacked'' together.

\bibliographystyle{abbrv}
\bibliography{allbib}

\end{document}